\documentclass[journal]{IEEEtran}
\usepackage{amsmath,amssymb,bm}
\usepackage{booktabs}
\usepackage{graphicx}
\usepackage[caption=false,font=footnotesize]{subfig}
\usepackage{tikz}
\usepackage{circuitikz}
\usetikzlibrary{arrows.meta,positioning,fit}
\usepackage{cite}
\usepackage{placeins}
\usepackage[hidelinks]{hyperref}
\graphicspath{{figures/}}
\newcommand{\vect}[1]{\bm{#1}}
\newcommand{\mat}[1]{\bm{#1}}
\newcommand{\CoE}{\mathrm{Co}_{\mathrm e}}
\newcommand{\CoC}{\mathrm{Co}_{\mathrm c}}
\newtheorem{theorem}{Theorem}

\begin{document}
\title{Row-Local Spectral Certificates and Analog Courant Metrics for
Finite-Bandwidth Feedback-Coupled Solvers}
\author{Arash~Ghasemi%
\thanks{A. Ghasemi is with Sumer Analog LLC, Knoxville, TN 37923 USA, and
also with the University of Tennessee at Chattanooga, Chattanooga, TN 37403
USA.}}
\maketitle

\begin{abstract}
This paper identifies a dynamical constraint on analog-computing approaches in which a row of the matrix is represented by an impedance network. It shows that the fastest normalized mode is no more than 2$\pi$ times the largest combined unity-gain bandwidth (CUGBW) among all the circuit rows. The CUGBW of a row equals its finite-gain-adjusted unity-gain bandwidth plus the contributions of all rows coupled to it. Each contribution is the square root of the product of the two rows' unity-gain bandwidths multiplied by their coupling conductance and divided by the square root of the product of their total conductance loadings. This bound plays a role analogous to the Courant-number restriction in time-stepping methods by limiting the operator rates that analog hardware can physically represent and resolve at its outputs. It is shown that in mixed-signal approach, the Courant metrics become stability criteria, and they must be less than 2 in order for the solver to remain stable. The theory is validated using large-scale LTspice simulations across architectures ranging from CMOS to thermionic vacuum-tube circuits. The benchmark circuits implement a one-dimensional heat equation, a graph-based semi-supervised learning problem, and a graph-regularized regression.
\end{abstract}

\begin{IEEEkeywords}
Analog computing, analog Courant number, analog in-memory computing,
Gershgorin theorem, linear-system solvers, operational-amplifier bandwidth,
sampled-data systems, spectral bounds.
\end{IEEEkeywords}

\section{Introduction}
\IEEEPARstart{F}{eedback-coupled} analog networks solve structured linear
systems by allowing circuit dynamics to converge to a fixed point
\cite{huang2016,sun2019matrix,hasler2021,mannocci2021universal}. Their
computational latency is not determined by the gain-bandwidth product of an
isolated amplifier alone. Finite amplifier bandwidth interacts with row
loading and network connectivity, producing a spectrum of physical decay
rates. The lowest rate determines worst-case asymptotic settling while the highest
rate determines the shortest reduced-model time scale. If a sampled
controller advances the network state explicitly, the maximum rate governs the maximum stable update
interval.

The basic dynamics of analog linear solvers have been studied extensively. Sun
\emph{et al.} related a one-pole closed-loop resistive solver to the smallest
eigenvalue, logarithmic tolerance-settling time, and a finite-difference
spectral-radius condition \cite{sun2020complexity}. Hasler and Natarajan
analyzed configurable continuous-time systems with heterogeneous capacitance
and conductance matrices \cite{hasler2021}. Wang \emph{et al.} incorporated
finite gain bandwidth through a quadratic pole model for memristive linear
regression \cite{wang2021optimization}, while Mannocci and Ielmini developed a
general finite-gain, single-pole block-matrix model and Hurwitz conditions
\cite{mannocci2023generalized}. Open-loop matrix--vector readout has a
different loading-dependent time scale \cite{sun2021mvm}, and recent
architectures implement iterative or integrated closed-loop solvers
\cite{li2025iteration,zuo2025matrix,mannocci2026closedloop}. 

This paper instead addresses the complementary circuit-design question of
whether the fastest generalized mode can be certified from programmed row
parameters before a global eigensolution is performed. For reciprocal
follower networks with nonnegative passive couplings, the answer is a
row-local upper certificate whose terms are available from conductances, row
sums, dc gains, and unity-gain bandwidths. Among the closest models above,
none provides this particular hardware-parametric fast-rate certificate
together with its continuous-versus-sampled interpretation and bounded-input
rail conditions.

The paper makes four main contributions.
\begin{enumerate}
\item A loaded finite-bandwidth follower row is reduced to an induced
capacitance \(C_j=S_j/(2\pi\nu_{t,j})\). The realizable class consists of
reciprocal Laplacian-plus-anchor systems, which are Stieltjes/M-matrix systems.
\item A row-local certificate \(\lambda_{\mathrm{G,max}}\) is derived for the
upper generalized rate. It requires no eigendecomposition and becomes
asymptotically tight, for example, for the uniform 1D heat-transfer problem.
\item The exact and certified analog Courant numbers,
\(\CoE=T_s\lambda_{\max}\) and
\(\CoC=T_s\lambda_{\mathrm{G,max}}\), are introduced. They describe
observation resolution for a passive continuous solver, but become stability
quantities only for an explicit sampled update. A conservative rail-safe
invariant set is obtained in the same weighted norm.
\item Controlled modal LTspice tests directly excite the two spectral
endpoints. Vendor-macromodel studies then examine regular and irregular
200-node CMOS-based networks and a 30-node hybrid thermionic vacuum-tube implementation.
\end{enumerate}

\section{Finite-Bandwidth Row Model}

\subsection{Loaded Follower Cell}
Fig.~\ref{fig:row-cell} shows row \(j\). The reciprocal branch between rows
\(i\) and \(j\) has conductance \(g_{ji}=g_{ij}=1/R_{ji}\); \(g_{a,j}\)
is an optional grounding or data-fidelity anchor, not the unity-feedback path
of the voltage follower. The programmed source current is \(I_j\), the
summing-node voltage is \(v_j^\star\), and \(v_j\) is the represented
unknown.

\begin{figure}[!t]
\centering
\resizebox{\columnwidth}{!}{%
\begin{tikzpicture}[>=Latex,font=\scriptsize,
 resistor/.style={draw,minimum width=0.82cm,minimum height=0.22cm,inner sep=0pt},
 via/.style={circle,draw,fill=white,minimum size=3.6pt,inner sep=0pt}]
 \draw[line width=0.7pt] (-5.25,-2.75) rectangle (-4.62,2.22);
 \node[rotate=90,font=\normalsize] at (-4.94,-0.27) {Set gains};
 \coordinate (s) at (0,0);
 \fill (s) circle (1.8pt);
 \node[below right] at (s) {$v_j^\star$};
 \foreach \y/\vin/\rlabel in {
 1.45/{v_1}/{R_{j1}},0.73/{v_2}/{R_{j2}},
 -0.73/{v_i}/{R_{ji}},-1.45/{v_m}/{R_{jm}}}{
  \node[left] at (-3.55,\y) {$\vin$};
  \node[via] at (-3.18,\y) {};
  \draw[-Latex] (-3.05,\y)--(-2.43,\y);
  \node[resistor] at (-2.02,\y) {};
  \node[fill=white,inner sep=0.4pt] at (-2.02,\y+0.34) {$\rlabel$};
  \draw (-1.61,\y)--(-0.58,\y);
  \draw[dashed,-Latex] (-4.62,\y-0.25)--(-2.48,\y-0.25)--(-2.48,\y-0.10);
 }
 \draw (-0.58,1.45)--(-0.58,-1.45);
 \draw (-0.58,0)--(s);
 \node at (-3.34,0) {$\vdots$};
 \draw (s)--(1.05,0);
 \draw (1.05,-0.90)--(1.05,0.90)--(2.80,0)--cycle;
 \node at (1.28,0.25) {$+$};
 \node at (1.28,-0.60) {$-$};
 \draw (2.80,0)--(4.15,0);
 \node[via] at (4.15,0) {};
 \node[right] at (4.30,0) {$v_j$};
 \draw (1.05,-0.60)--(0.67,-0.60)--(0.67,-1.05)--(3.68,-1.05)--(3.68,0);
 \node[resistor,rotate=90] at (0,1.05) {};
 \node[left] at (-0.18,1.05) {$R_{a,j}$};
 \draw (0,1.46)--(0,1.82);
 \draw (-0.28,1.82)--(0.28,1.82);
 \draw (-0.20,1.95)--(0.20,1.95);
 \draw (-0.12,2.07)--(0.12,2.07);
 \draw (0,0.64)--(s);
 \draw[dashed,-Latex] (-4.62,2.08)--(-0.35,2.08)--(-0.35,1.05)--(-0.17,1.05);
 \draw (-0.58,0)--(-0.58,-1.82);
 \draw (-0.58,-2.13) circle (0.31);
 \draw[-Latex] (-0.58,-2.32)--(-0.58,-1.95);
 \node[left] at (-0.91,-2.05) {$I_j$};
 \draw (-0.58,-2.44)--(-0.58,-2.70);
 \draw (-0.86,-2.70)--(-0.30,-2.70);
 \draw (-0.78,-2.82)--(-0.38,-2.82);
 \draw (-0.70,-2.93)--(-0.46,-2.93);
 \draw[dashed,-Latex] (-4.62,-2.13)--(-0.94,-2.13);
 \draw (s)--(0,-1.78);
 \draw (-0.30,-1.78)--(0.30,-1.78);
 \draw (-0.30,-1.94)--(0.30,-1.94);
 \draw (0,-1.94)--(0,-2.39);
 \draw (-0.26,-2.39)--(0.26,-2.39);
 \draw (-0.18,-2.51)--(0.18,-2.51);
 \draw (-0.10,-2.62)--(0.10,-2.62);
 \node[right] at (0.35,-1.86) {$C_j$};
\end{tikzpicture}}
\caption{One programmable row of the reciprocal solver. \(R_{a,j}\) is an
anchoring branch, and \(C_j\) is the equivalent capacitance induced by the
finite-bandwidth follower. It is not a physical capacitor.}
\label{fig:row-cell}
\end{figure}

Kirchhoff's current law and the total loading are
\begin{equation}
 g_{a,j}v_j^\star-I_j+
 \sum_{i\ne j}g_{ji}(v_j^\star-v_i)=0,\qquad
 S_j=g_{a,j}+\sum_{i\ne j}g_{ji}.
\label{eq:kcl}
\end{equation}
Let \(A_{0j}\) and \(\nu_{t,j}\) be the follower's dc open-loop gain and
unity-gain bandwidth, and define
\(\alpha_j=A_{0j}/(1+A_{0j})\). After any slew-limited startup, the
single-pole small-signal relation is
\begin{equation}
 v_j^\star=\frac{\dot v_j}{2\pi\nu_{t,j}}+\frac{v_j}{\alpha_j}.
\label{eq:follower}
\end{equation}
Substitution in \eqref{eq:kcl} yields
\begin{equation}
 C_j\dot v_j+\frac{S_j}{\alpha_j}v_j
 -\sum_{i\ne j}g_{ji}v_i=I_j,\qquad
 C_j=\frac{S_j}{2\pi\nu_{t,j}}.
\label{eq:row-dynamics}
\end{equation}
Thus capacitance is jointly set by bandwidth and row loading, and hence, it is not an
independent tuning parameter.

\subsection{Generalized Operator and Scope}
Collecting the rows gives
\begin{equation}
 \mat C\dot{\vect v}+\mat G\vect v=\vect I,\quad
 G_{jj}=\frac{S_j}{\alpha_j},\quad G_{ji}=-g_{ji},
\label{eq:matrix-dynamics}
\end{equation}
where \(\mat C=\operatorname{diag}(C_j)\). Reciprocal couplings make
\(\mat G\) symmetric. With at least one effective anchor in every connected
component, \(\mat G\succ0\). The passive topology therefore implements a
Laplacian-plus-anchor Stieltjes matrix, rather than an arbitrary signed or
nonsymmetric operator. Finite gain increases the diagonal and makes the
steady solution \(\mat G^{-1}\vect I\) differ slightly from the intended
infinite-gain solution.

With \(\vect w=\mat C^{1/2}\vect v\),
\begin{equation}
 \dot{\vect w}+\mat A\vect w=\mat C^{-1/2}\vect I,\qquad
 \mat A=\mat C^{-1/2}\mat G\mat C^{-1/2}.
\label{eq:normalized}
\end{equation}
The symmetric positive-definite matrix
\(\mat A=\mat Q\mat\Lambda\mat Q^{T}\) has rates
\(0<\lambda_{\min}\le\lambda_k\le\lambda_{\max}\).
For constant \(\vect I\), the error satisfies
\begin{equation}
 \|\vect v(t)-\vect v_\infty\|_{\mat C}
 \le e^{-\lambda_{\min}t}
 \|\vect v(0)-\vect v_\infty\|_{\mat C}.
\label{eq:settling}
\end{equation}
Hence a worst-case relative reduction \(\epsilon\) requires
\(t_\epsilon\ge\lambda_{\min}^{-1}\ln(1/\epsilon)\), consistent with
the lower-edge settling analyses in \cite{sun2020complexity,hasler2021}.
The ratio
\(\kappa_{\mathrm{dyn}}=\lambda_{\max}/\lambda_{\min}\) measures the
separation of the fastest and slowest physical time scales.

\section{Row-Local Upper-Rate Certificate}

The entries of \(\mat A\) follow directly from \eqref{eq:row-dynamics} and are
\begin{equation}
 a_{jj}=\frac{2\pi\nu_{t,j}}{\alpha_j},\qquad
 a_{ji}=-\frac{2\pi g_{ji}\sqrt{\nu_{t,j}\nu_{t,i}}}
 {\sqrt{S_jS_i}}\quad(i\ne j).
\label{eq:A-entries}
\end{equation}

\begin{theorem}[Row-local spectral certificate]
For the reciprocal positive-definite network in
\eqref{eq:matrix-dynamics}, define
\begin{align}
 L_{\mathrm G}
 &=\min_j\!\left[
 \frac{2\pi\nu_{t,j}}{\alpha_j}
 -\sum_{i\ne j}
 \frac{2\pi g_{ji}\sqrt{\nu_{t,j}\nu_{t,i}}}{\sqrt{S_jS_i}}
 \right],\label{eq:LG}\\
 \lambda_{\mathrm{G,max}}
 &=\max_j\!\left[
 \frac{2\pi\nu_{t,j}}{\alpha_j}
 +\sum_{i\ne j}
 \frac{2\pi g_{ji}\sqrt{\nu_{t,j}\nu_{t,i}}}{\sqrt{S_jS_i}}
 \right].\label{eq:UG}
\end{align}
Then \(L_{\mathrm G}\le\lambda_{\min}\) and
\(\lambda_{\max}\le\lambda_{\mathrm{G,max}}\).
\end{theorem}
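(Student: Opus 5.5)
The plan is to apply Gershgorin's circle theorem directly to the normalized operator \(\mat A=\mat C^{-1/2}\mat G\mat C^{-1/2}\) from \eqref{eq:normalized}. First I will verify the entries \eqref{eq:A-entries}. Since \(\mat C\) is diagonal, \(a_{ji}=G_{ji}/\sqrt{C_jC_i}\). On the diagonal this gives
\[
a_{jj}=\frac{S_j/\alpha_j}{S_j/(2\pi\nu_{t,j})}=\frac{2\pi\nu_{t,j}}{\alpha_j}.
\]
Off the diagonal, \(G_{ji}=-g_{ji}\) and \(\sqrt{C_jC_i}=\sqrt{S_jS_i}/(2\pi\sqrt{\nu_{t,j}\nu_{t,i}})\), which gives the stated formula. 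Here we use \(S_j>0\) (every row is loaded) and \(\nu_{t,j}>0\), so that \(\mat C\succ0\) and the scaling is well defined.

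Next, \(\mat A\) is real symmetric because \(g_{ji}=g_{ij}\). It is congruent to \(\mat G\succ0\), so its spectrum is real and positive, \(0<\lambda_{\min}\le\lambda_{\max}\). Gershgorin's theorem places every eigenvalue \(\lambda\) in some disc centered at \(a_{jj}\) with radius \(r_j=\sum_{i\ne j}|a_{ji}|\). Since \(g_{ji}\ge0\), we have \(|a_{ji}|=-a_{ji}\), so
\[
r_j=\sum_{i\ne j}\frac{2\pi g_{ji}\sqrt{\nu_{t,j}\nu_{t,i}}}{\sqrt{S_jS_i}}.
\]
Because \(\lambda\) is real, the disc containing it reduces to the interval \([a_{jj}-r_j,\,a_{jj}+r_j]\). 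This yields
\[
\min_j(a_{jj}-r_j)\le\lambda\le\max_j(a_{jj}+r_j)
\]
for every eigenvalue. The two sides are exactly \(L_{\mathrm G}\) and \(\lambda_{\mathrm{G,max}}\) in \eqref{eq:LG}--\eqref{eq:UG}. Applying the upper bound to \(\lambda_{\max}\) and the lower bound to \(\lambda_{\min}\) completes the argument.

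There is no serious obstacle. The one point needing care is that Gershgorin must be applied to \(\mat A\), not to \(\mat G\) or \(\mat C^{-1}\mat G\) directly. The row-local quantities are stated in the symmetrically scaled coordinates, and a bound on \(\mat C^{-1}\mat G\) would involve the unsymmetric radii \(\sum_i g_{ji}\,2\pi\nu_{t,j}/S_j\) instead. I would briefly note this distinction, since \(\mat C^{-1}\mat G\) is similar to \(\mat A\) and the two radii sets therefore give different valid certificates. I would also note that \(L_{\mathrm G}\) may be negative or trivial, in which case the lower statement holds vacuously.
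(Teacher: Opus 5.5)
Your proposal is correct and matches the paper's proof: apply Gershgorin to the symmetric scaled matrix \(\mat A=\mat C^{-1/2}\mat G\mat C^{-1/2}\), use real eigenvalues to reduce each disc to an interval, and use \(g_{ji}\ge 0\) to identify the radii, then take the smallest left and largest right endpoints. Your extra steps (deriving the entries \eqref{eq:A-entries}, noting \(S_j>0\), and the congruence to \(\mat G\succ0\)) only make explicit what the paper's two-line proof leaves implicit.
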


\begin{IEEEproof}
Every eigenvalue of the real symmetric matrix \(\mat A\) lies in a
Gershgorin interval centered at \(a_{jj}\) with radius
\(\sum_{i\ne j}|a_{ji}|\) \cite{gershgorin1931}. Substituting
\eqref{eq:A-entries} and taking the smallest left and largest right endpoints
gives \eqref{eq:LG}--\eqref{eq:UG}.
\end{IEEEproof}

The lower expression is useful only when it is positive. A nonpositive value
is compatible with \(\mat A\succ0\), but provides no useful certified settling
rate. The upper expression remains finite and certifies the fastest
reduced-model mode using only local row quantities.
If a desired certified dimensionless rate is \(\CoC\le\gamma\), every row can
be checked independently against
\begin{equation}
 T_s\!\left[
 \frac{2\pi\nu_{t,j}}{\alpha_j}
 +\sum_{i\ne j}
 \frac{2\pi g_{ji}\sqrt{\nu_{t,j}\nu_{t,i}}}{\sqrt{S_jS_i}}
 \right]\le\gamma .
\label{eq:row-design}
\end{equation}

Multiplying every \(g_{ji}\), \(g_{a,j}\), and \(I_j\) by a common positive
factor leaves the algebraic solution unchanged. It also scales \(\mat G\) and
\(\mat C\) together, so \(\mat A\), its spectrum, and
\(\lambda_{\mathrm{G,max}}\) are invariant. The lower resistance can still
change power, noise, slew, output loading, and nonlinearity, but it does not by
itself evade the small-signal spectral rate.

\subsection{A PVT-Robust Local Margin}
The same row structure provides a conservative process, voltage, and
temperature (PVT) extension without corner-by-corner eigen-solutions. Suppose
characterization supplies simultaneous bounds
\[
\begin{gathered}
 0<\alpha_j^{-}\le\alpha_j,\qquad
 \nu_{t,j}\le\nu_{t,j}^{+},\\
 0\le g_{ji}\le g_{ji}^{+},\qquad
 0<S_j^{-}\le S_j .
\end{gathered}
\]
Each term on the right-hand side of \eqref{eq:UG} is then no greater than the
corresponding worst-case term, so
\begin{equation}
 \lambda_{\max}\le\overline{\lambda}_{\mathrm{G,max}}
 :=\max_j\!\left[
 \frac{2\pi\nu_{t,j}^{+}}{\alpha_j^{-}}+
 \sum_{i\ne j}
 \frac{2\pi g_{ji}^{+}
 \sqrt{\nu_{t,j}^{+}\nu_{t,i}^{+}}}
 {\sqrt{S_j^{-}S_i^{-}}}
 \right].
\label{eq:robust-UG}
\end{equation}
When \(S_j\) is itself assembled from uncertain branches, \(S_j^{-}\) must be
formed from simultaneous lower bounds on its anchor and incident
conductances. Replacing \(\lambda_{\mathrm{G,max}}\) by
\(\overline{\lambda}_{\mathrm{G,max}}\) in the sampled conditions below gives
a row-computable PVT margin. The construction is conservative by design and
retains the advantage that only local characterization data are needed.

\section{Continuous Safety and Sampled Updates}

\subsection{Conservative Continuous-Time Rail Certificate}
Define the modal forcing
\(\vect b(t)=\mat Q^T\mat C^{-1/2}\vect I(t)\), assume
\(\|\vect b(t)\|_2\le\bar b\), and set
\(r_{\mathrm{safe}}=\min_j\sqrt{C_j}V_{\mathrm{safe},j}\), where each
\(V_{\mathrm{safe},j}\) lies strictly within both the supply rails and the
small-signal region. Variation of constants gives
\begin{equation}
 \|\mat C^{1/2}\vect v(t)\|_2
 \le e^{-\lambda_{\min}t}\|\mat C^{1/2}\vect v(0)\|_2
 +(1-e^{-\lambda_{\min}t})\frac{\bar b}{\lambda_{\min}}.
\label{eq:rail-bound}
\end{equation}
Consequently,
\begin{equation}
 \max\!\left\{\|\mat C^{1/2}\vect v(0)\|_2,
 \frac{\bar b}{\lambda_{\min}}\right\}<r_{\mathrm{safe}}
\label{eq:rail-safe}
\end{equation}
is sufficient for \(|v_j(t)|<V_{\mathrm{safe},j}\) for all \(j,t\).
This \(C\)-weighted ball is inscribed in the rectangular rail region and can
be highly conservative. Failure of \eqref{eq:rail-safe} does not imply rail
contact. In fact, it indicates passing close to the rails, which may strengthen nonlinear instability/inaccuracy effects. 
\subsection{Explicit Mixed-Signal Update}
If a controller samples the rows and explicitly advances the physical state,
the implementable forward-Euler update is
\begin{equation}
 \vect v[n+1]=\vect v[n]+T_s\mat C^{-1}
 \bigl(\vect I[n]-\mat G\vect v[n]\bigr).
\label{eq:euler-physical}
\end{equation}
In modal coordinates its homogeneous factors are
\(1-T_s\lambda_k\). Define the circuit Courant numbers: 
\begin{equation}
 \CoE=T_s\lambda_{\max},\qquad
 \CoC=T_s\lambda_{\mathrm{G,max}},\qquad \CoE\le\CoC .
\label{eq:courant}
\end{equation}
These are CFL(Courant–Friedrichs–Lewy)-inspired circuit metrics since the underlying CFL and
forward-Euler results are classical \cite{courant1928}. Zero-input
asymptotic stability is exactly \(\CoE<2\); \(\CoC<2\) is a row-computable
sufficient condition, while \(\CoC\le1\) also prevents alternating modal
signs. For \(\|\vect b[n]\|_2\le\bar b\), the rail-safe ball is invariant if
\begin{equation}
 \|\mat C^{1/2}\vect v[0]\|_2<r_{\mathrm{safe}},\quad
 \lambda_{\min}>\frac{\bar b}{r_{\mathrm{safe}}},\quad
 \lambda_{\max}<2f_s-\frac{\bar b}{r_{\mathrm{safe}}}.
\label{eq:euler-rail}
\end{equation}
The sampled limits apply only to \eqref{eq:euler-physical}. A passively
observed continuous solver has no selected integration step and therefore no
CFL stability condition applies. There, \(T_s\lambda_{\max}\) only compares the
acquisition interval with the fastest reduced-model time scale.

Table~\ref{tab:roles} summarizes which spectral edge answers each design
question. A positive \(L_{\mathrm G}\) can replace \(\lambda_{\min}\) in a
sufficient settling or rail test. When \(L_{\mathrm G}\le0\), the local lower
test is simply silent. In contrast, \(\lambda_{\mathrm{G,max}}\) is always a
valid finite upper certificate under the stated assumptions.

\begin{table}[!t]
\centering
\caption{Spectral Quantities Used in Circuit Design}
\label{tab:roles}
\footnotesize
\setlength{\tabcolsep}{2.5pt}
\begin{tabular}{p{.30\columnwidth}p{.29\columnwidth}p{.31\columnwidth}}
\toprule
Question & Exact condition & Row-local sufficient test\\
\midrule
Residual \(\le\epsilon\) &
\(t\ge\ln(1/\epsilon)/\lambda_{\min}\) &
\(t\ge\ln(1/\epsilon)/L_{\mathrm G}\), \(L_{\mathrm G}>0\)\\
Euler stability &
\(T_s\lambda_{\max}<2\) &
\(T_s\lambda_{\mathrm{G,max}}<2\)\\
No modal alternation &
\(T_s\lambda_{\max}\le1\) &
\(T_s\lambda_{\mathrm{G,max}}\le1\)\\
PVT Euler stability &
corner spectrum &
\(T_s\overline{\lambda}_{\mathrm{G,max}}<2\)\\
\bottomrule
\end{tabular}
\end{table}

\section{Uniform Heat-Transfer Problem Network Corollary}

For an \(n\)-node one-dimensional Dirichlet heat stencil, assign a common
conductance \(g_h\) to each of the two row branches, a common
\((A_0,\nu_t)\), and \(\alpha=A_0/(1+A_0)\). Then
\(S_j=2g_h\), \(C_h=g_h/(\pi\nu_t)\), and the exact generalized rates are
\begin{equation}
 \lambda_k=\pi\nu_t\!\left(\frac{2}{\alpha}
 -2\cos\frac{k\pi}{n+1}\right),\qquad k=1,\ldots,n .
\label{eq:heat-spectrum}
\end{equation}
Thus
\begin{align}
 \lambda_{\min}
 &=\frac{2\pi\nu_t}{A_0}
 +2\pi\nu_t\!\left(1-\cos\frac{\pi}{n+1}\right),\label{eq:heat-min}\\
 \lambda_{\max}
 &=\pi\nu_t\!\left(\frac{2}{\alpha}
 +2\cos\frac{\pi}{n+1}\right),\label{eq:heat-max}\\
 \lambda_{\mathrm{G,max}}
 &=2\pi\nu_t\!\left(\frac{1}{\alpha}+1\right).\label{eq:heat-cert}
\end{align}
When \(A_0\gg2(n+1)^2/\pi^2\),
\(\lambda_{\min}\sim\pi^3\nu_t/(n+1)^2\), whereas
\(\lambda_{\max}\sim4\pi\nu_t\). The slow time therefore grows as \(n^2\),
but the fastest rate stays at the amplifier-bandwidth scale. Moreover,
\(\lambda_{\mathrm{G,max}}-\lambda_{\max}
=2\pi\nu_t[1-\cos(\pi/(n+1))]\), proving asymptotic tightness of the
upper certificate. The scale \(g_h\) cancels exactly. At fixed finite gain,
the apparent lower-rate floor \(2\pi\nu_t/A_0\) is inseparable from
finite-gain steady-state bias.

For example, the \(n=200\) LTC1151 parameter set in
Table~\ref{tab:spectrum} gives
\(\lambda_{\max}=31.414\,\mathrm{M\,s^{-1}}\) and
\(\lambda_{\mathrm{G,max}}=31.416\,\mathrm{M\,s^{-1}}\). An explicit update
is therefore certified stable for \(f_s>15.708\,\mathrm{MHz}\), but the
stronger nonalternating certificate requires
\(f_s\ge31.416\,\mathrm{MHz}\). At \(f_s=20\,\mathrm{MHz}\), the update is
stable but its highest modes alternate; at \(15\,\mathrm{MHz}\), the exact
fastest factor has magnitude greater than one.

\section{Validation}

\subsection{Method and Direct Endpoint Test}
Three distinct forms of evidence are utilized in this paper. First, a controlled
behavioral LTspice network tests the reduced one-pole equations directly.
Second, exact eigensolutions of the assembled \((\mat G,\mat C)\) pairs test
certificate conservatism. Third, scalable circuits are created by utilizing a vendor-provided hybrid macromodel/transistor-level SPICE model for the op-amps. 

The controlled circuit uses \(n=20\), \(R=2\,\mathrm{k}\Omega\),
\(A_0=10^7\), and \(\nu_t=2.5\,\mathrm{MHz}\). Its internal states are
initialized to the exact sine eigenvectors \(k=1\) and \(k=20\), separately,
so the measured node contains no intentional modal mixture. Exponential fits
give \(1.75446722\times10^5\,\mathrm{s}^{-1}\) for the slow mode and
\(3.12405085\times10^7\,\mathrm{s}^{-1}\) for the fast mode, versus exact
values \(1.75446542\times10^5\) and \(3.12404831\times10^7\,\mathrm{s}^{-1}\).
The relative errors are \(1.03\times10^{-6}\) and \(8.13\times10^{-7}\).

\begin{figure}[!t]
\centering
\resizebox{\columnwidth}{!}{
\begin{tikzpicture}[font=\scriptsize,line cap=round,line join=round]
\draw[blue!70!black,line width=0.75pt] (2.05,2.62) -- (2.50,2.62);
\node[right] at (2.50,2.62) {LTspice};
\draw[black,dashed,line width=0.65pt] (4.15,2.62) -- (4.60,2.62);
\node[right] at (4.60,2.62) {exact};
\begin{scope}[xshift=0.00cm]
\draw[black] (0,0) rectangle (3.25,2.00);
\draw[gray!20] (0,0.95238) -- (3.25,0.95238);
\draw (0,0.00000) -- (-0.055,0.00000) node[left] {0};
\draw (0,0.95238) -- (-0.055,0.95238) node[left] {0.5};
\draw (0,1.90476) -- (-0.055,1.90476) node[left] {1};
\draw (0.00000,0) -- (0.00000,-0.055) node[below] {0};
\draw (1.62500,0) -- (1.62500,-0.055) node[below] {10};
\draw (3.25000,0) -- (3.25000,-0.055) node[below] {20};
\draw[blue!70!black,line width=0.75pt] plot coordinates {(0.00000,1.90476) (0.04063,1.82302) (0.08125,1.74479) (0.12188,1.66992) (0.16250,1.59825) (0.20312,1.52967) (0.24375,1.46402) (0.28438,1.40120) (0.32500,1.34107) (0.36562,1.28352) (0.40625,1.22844) (0.44688,1.17572) (0.48750,1.12526) (0.52813,1.07697) (0.56875,1.03076) (0.60938,0.98652) (0.65000,0.94419) (0.69063,0.90367) (0.73125,0.86489) (0.77187,0.82777) (0.81250,0.79225) (0.85313,0.75825) (0.89375,0.72571) (0.93437,0.69457) (0.97500,0.66476) (1.01562,0.63624) (1.05625,0.60893) (1.09688,0.58280) (1.13750,0.55779) (1.17813,0.53386) (1.21875,0.51095) (1.25937,0.48902) (1.30000,0.46803) (1.34062,0.44795) (1.38125,0.42873) (1.42188,0.41033) (1.46250,0.39272) (1.50313,0.37587) (1.54375,0.35974) (1.58438,0.34430) (1.62500,0.32952) (1.66563,0.31538) (1.70625,0.30185) (1.74687,0.28889) (1.78750,0.27650) (1.82812,0.26463) (1.86875,0.25328) (1.90938,0.24241) (1.95000,0.23200) (1.99063,0.22205) (2.03125,0.21252) (2.07188,0.20340) (2.11250,0.19467) (2.15313,0.18632) (2.19375,0.17832) (2.23438,0.17067) (2.27500,0.16334) (2.31563,0.15633) (2.35625,0.14963) (2.39688,0.14320) (2.43750,0.13706) (2.47813,0.13118) (2.51875,0.12555) (2.55938,0.12016) (2.60000,0.11500) (2.64063,0.11007) (2.68125,0.10535) (2.72187,0.10082) (2.76250,0.09650) (2.80313,0.09236) (2.84375,0.08839) (2.88437,0.08460) (2.92500,0.08097) (2.96563,0.07749) (3.00625,0.07417) (3.04688,0.07099) (3.08750,0.06794) (3.12812,0.06502) (3.16875,0.06223) (3.20938,0.05956) (3.25000,0.05701)};
\draw[black,dashed,line width=0.65pt] plot coordinates {(0.00000,1.90476) (0.04063,1.82302) (0.08125,1.74479) (0.12188,1.66991) (0.16250,1.59825) (0.20312,1.52967) (0.24375,1.46402) (0.28438,1.40120) (0.32500,1.34106) (0.36562,1.28351) (0.40625,1.22843) (0.44688,1.17572) (0.48750,1.12526) (0.52813,1.07697) (0.56875,1.03076) (0.60938,0.98652) (0.65000,0.94419) (0.69063,0.90367) (0.73125,0.86489) (0.77187,0.82777) (0.81250,0.79225) (0.85313,0.75825) (0.89375,0.72571) (0.93437,0.69457) (0.97500,0.66476) (1.01562,0.63624) (1.05625,0.60893) (1.09688,0.58280) (1.13750,0.55779) (1.17813,0.53386) (1.21875,0.51095) (1.25937,0.48902) (1.30000,0.46803) (1.34062,0.44795) (1.38125,0.42873) (1.42188,0.41033) (1.46250,0.39272) (1.50313,0.37587) (1.54375,0.35974) (1.58438,0.34430) (1.62500,0.32952) (1.66563,0.31538) (1.70625,0.30185) (1.74687,0.28889) (1.78750,0.27650) (1.82812,0.26463) (1.86875,0.25328) (1.90938,0.24241) (1.95000,0.23200) (1.99063,0.22205) (2.03125,0.21252) (2.07188,0.20340) (2.11250,0.19467) (2.15313,0.18632) (2.19375,0.17832) (2.23438,0.17067) (2.27500,0.16334) (2.31563,0.15633) (2.35625,0.14963) (2.39688,0.14320) (2.43750,0.13706) (2.47813,0.13118) (2.51875,0.12555) (2.55938,0.12016) (2.60000,0.11500) (2.64063,0.11007) (2.68125,0.10535) (2.72187,0.10082) (2.76250,0.09650) (2.80313,0.09236) (2.84375,0.08839) (2.88437,0.08460) (2.92500,0.08097) (2.96563,0.07749) (3.00625,0.07417) (3.04688,0.07099) (3.08750,0.06794) (3.12812,0.06502) (3.16875,0.06223) (3.20938,0.05956) (3.25000,0.05701)};
\node[below=0.42cm] at (1.62500,0) {Time ($\mu\mathrm{s}$)};
\node[above=0.14cm] at (1.62500,2.00000) {(a) $k=1$, slowest mode};
\node[rotate=90] at (-0.72,1.00000) {Normalized amplitude};
\end{scope}
\begin{scope}[xshift=3.95cm]
\draw[black] (0,0) rectangle (3.25,2.00);
\draw[gray!20] (0,0.95238) -- (3.25,0.95238);
\draw (0,0.00000) -- (-0.055,0.00000) node[left] {0};
\draw (0,0.95238) -- (-0.055,0.95238) node[left] {0.5};
\draw (0,1.90476) -- (-0.055,1.90476) node[left] {1};
\draw (0.00000,0) -- (0.00000,-0.055) node[below] {0};
\draw (1.62500,0) -- (1.62500,-0.055) node[below] {75};
\draw (3.25000,0) -- (3.25000,-0.055) node[below] {150};
\draw[blue!70!black,line width=0.75pt] plot coordinates {(0.00000,1.90476) (0.04062,1.79640) (0.08125,1.69419) (0.12187,1.59780) (0.16250,1.50690) (0.20312,1.42117) (0.24375,1.34031) (0.28437,1.26406) (0.32500,1.19214) (0.36562,1.12432) (0.40625,1.06035) (0.44687,1.00002) (0.48750,0.94313) (0.52812,0.88947) (0.56875,0.83886) (0.60938,0.79114) (0.65000,0.74613) (0.69063,0.70368) (0.73125,0.66364) (0.77187,0.62589) (0.81250,0.59028) (0.85313,0.55670) (0.89375,0.52502) (0.93437,0.49515) (0.97500,0.46698) (1.01562,0.44041) (1.05625,0.41536) (1.09688,0.39173) (1.13750,0.36944) (1.17812,0.34842) (1.21875,0.32860) (1.25937,0.30990) (1.30000,0.29227) (1.34062,0.27564) (1.38125,0.25996) (1.42188,0.24517) (1.46250,0.23122) (1.50313,0.21807) (1.54375,0.20566) (1.58438,0.19396) (1.62500,0.18292) (1.66562,0.17252) (1.70625,0.16270) (1.74687,0.15345) (1.78750,0.14472) (1.82812,0.13648) (1.86875,0.12872) (1.90937,0.12139) (1.95000,0.11449) (1.99063,0.10797) (2.03125,0.10183) (2.07187,0.09604) (2.11250,0.09057) (2.15313,0.08542) (2.19375,0.08056) (2.23437,0.07598) (2.27500,0.07165) (2.31562,0.06758) (2.35625,0.06373) (2.39688,0.06011) (2.43750,0.05669) (2.47812,0.05346) (2.51875,0.05042) (2.55938,0.04755) (2.60000,0.04485) (2.64062,0.04230) (2.68125,0.03989) (2.72188,0.03762) (2.76250,0.03548) (2.80312,0.03346) (2.84375,0.03156) (2.88437,0.02976) (2.92500,0.02807) (2.96563,0.02647) (3.00625,0.02497) (3.04687,0.02355) (3.08750,0.02221) (3.12812,0.02094) (3.16875,0.01975) (3.20938,0.01863) (3.25000,0.01757)};
\draw[black,dashed,line width=0.65pt] plot coordinates {(0.00000,1.90476) (0.04062,1.79639) (0.08125,1.69419) (0.12187,1.59780) (0.16250,1.50690) (0.20312,1.42117) (0.24375,1.34031) (0.28437,1.26406) (0.32500,1.19214) (0.36562,1.12431) (0.40625,1.06035) (0.44687,1.00002) (0.48750,0.94313) (0.52812,0.88947) (0.56875,0.83886) (0.60938,0.79114) (0.65000,0.74613) (0.69063,0.70368) (0.73125,0.66364) (0.77187,0.62589) (0.81250,0.59028) (0.85313,0.55670) (0.89375,0.52502) (0.93437,0.49515) (0.97500,0.46698) (1.01562,0.44041) (1.05625,0.41536) (1.09688,0.39173) (1.13750,0.36944) (1.17812,0.34842) (1.21875,0.32860) (1.25937,0.30990) (1.30000,0.29227) (1.34062,0.27564) (1.38125,0.25996) (1.42188,0.24517) (1.46250,0.23122) (1.50313,0.21807) (1.54375,0.20566) (1.58438,0.19396) (1.62500,0.18292) (1.66562,0.17252) (1.70625,0.16270) (1.74687,0.15345) (1.78750,0.14472) (1.82812,0.13648) (1.86875,0.12872) (1.90937,0.12139) (1.95000,0.11449) (1.99063,0.10797) (2.03125,0.10183) (2.07187,0.09604) (2.11250,0.09057) (2.15313,0.08542) (2.19375,0.08056) (2.23437,0.07598) (2.27500,0.07165) (2.31562,0.06758) (2.35625,0.06373) (2.39688,0.06011) (2.43750,0.05669) (2.47812,0.05346) (2.51875,0.05042) (2.55938,0.04755) (2.60000,0.04485) (2.64062,0.04230) (2.68125,0.03989) (2.72188,0.03762) (2.76250,0.03548) (2.80312,0.03346) (2.84375,0.03156) (2.88437,0.02976) (2.92500,0.02807) (2.96563,0.02647) (3.00625,0.02497) (3.04687,0.02355) (3.08750,0.02221) (3.12812,0.02094) (3.16875,0.01975) (3.20938,0.01863) (3.25000,0.01757)};
\node[below=0.42cm] at (1.62500,0) {Time (ns)};
\node[above=0.14cm] at (1.62500,2.00000) {(b) $k=20$, fastest mode};
\end{scope}
\end{tikzpicture}}
\caption{Controlled LTspice endpoint test. Each trace is normalized to its
first post-initial-condition sample; dashed curves are
\(\exp(-\lambda_k t)\) using \eqref{eq:heat-spectrum}. (a) Slow eigenvector
\(k=1\). (b) Fast eigenvector \(k=20\).}
\label{fig:modal-validation}
\end{figure}

\subsection{Heat Transfer Problem Across CMOS and Vacuum-Tube Technologies}
The main heat-transfer benchmark has 200 follower rows, equal
\(2\,\mathrm{k}\Omega\) branches, end currents
\(\pm0.1\,\mathrm{mA}\), and \(\pm2.5\,\mathrm V\) supplies. Its ideal
profile is \(v_j=0.2-0.4j/201\) V. LTC1151, ADA4528-1, and OPA388 vendor
macromodels realize the same network but have distinct startup trajectories.
The final maximum errors are \(0.040\,\mathrm{mV}\) at \(100\,\mathrm{ms}\),
\(3.84\,\mathrm{mV}\) at \(100\,\mathrm{ms}\), and
\(1.72\,\mathrm{mV}\) at \(34.1\,\mathrm{ms}\), respectively. Their
datasheet \(A_0,\nu_t\) values \cite{ltc1151datasheet,ada4528datasheet,
opa388datasheet} are used only for the reduced rates in
Table~\ref{tab:spectrum}.

A separate 30-node network uses chopper-stabilized Philbrick K2-W/K2-P
macros operated as followers. Small-signal extraction gives
\(|A_0|=2.26\times10^5\) and mean
\(\nu_t=1.048\,\mathrm{MHz}\). The resulting surrogate has
\(\lambda_{\max}=1.314\times10^7\,\mathrm{s}^{-1}\) and
\(\lambda_{\mathrm{G,max}}=1.32\times10^7\,\mathrm{s}^{-1}\). The full
hybrid thermionic/behavioral transient reaches a \(2.08\,\mathrm{mV}\)
maximum profile error at \(300\,\mathrm{ms}\), but its slow nonmonotone startup
reflects chopper and macro states outside the one-pole surrogate. The assembled
hybrid macro and its tube-level K2-W core are documented in
Appendix~\ref{app:k2w-macro}.
The corresponding CMOS steady profiles and complete hybrid transient are shown
in Fig.~\ref{fig:heat-validation}.

\subsection{Irregular Graph Workloads}
Graph-based semisupervised learning represents labeled and unlabeled samples as vertices in a weighted graph. In this case, nearby or similar samples are connected by large weights, so the learned score should vary smoothly over the graph \cite{zhu2003,smola2003}. Recent resistive-memory graph accelerators implement graph embeddings through recurrent analog MVMs \cite{wang2023esgnn}. The graph-labeling circuit solves
\((\mat L+\eta\mat M)\vect f=\eta\mat M\vect y\) for 200 two-moons samples,
ten labels, a symmetrized \(15\)-nearest-neighbor graph, and \(\eta=4\)
\cite{zhu2003,smola2003}. All 200 output signs become correct after
\(1.29\,\mu\mathrm{s}\), and the final classification is 200/200 with minimum
margin \(0.637\,\mathrm V\). The maximum final deviation from the ideal
linear solve is \(67.7\,\mathrm{mV}\), showing that task correctness can
precede high-accuracy fixed-point convergence. The distance to the final
circuit state falls below \(1\,\mathrm{mV}\) at \(45.8\,\mu\mathrm{s}\).

The 13-band regression network solves
\((\mat I+0.9\mat L_{\mathrm{band}})\vect f=\vect y\), with six graph
neighbors on each side. It is more strongly anchored and has
\(\kappa_{\mathrm{dyn}}=7.975\). Its final maximum error relative to the ideal
solve is \(3.94\times10^{-7}\,\mathrm V\), and the distance to the final
circuit state falls below \(1\,\mathrm{mV}\) at \(3.49\,\mu\mathrm{s}\).
These cases also show the expected dependence on topology.
\(\lambda_{\mathrm{G,max}}/\lambda_{\max}\) is 1.672 for the irregular
label graph and 1.561 for the banded network, versus 1.00006 for the regular
heat-transfer stencil.
Fig.~\ref{fig:graph-validation} shows the classified samples, graph-error
histories, representative regression transients, and final regression profile.

As a separate lower-edge check, an effective tail rate can be extracted from
two persistent error thresholds,
\(\widehat{\lambda}_{\mathrm{tail}}
=\ln(\delta_{\mathrm{hi}}/\delta_{\mathrm{lo}})
/(t_{\mathrm{lo}}-t_{\mathrm{hi}})\). The LTC1151 heat thresholds at
4 and \(1\,\mathrm{mV}\) give
\(1.87\times10^3\,\mathrm{s}^{-1}=0.976\lambda_{\min}\). The graph-labeling
10-to-\(1\,\mathrm{mV}\) interval gives
\(1.17\times10^5\,\mathrm{s}^{-1}=0.700\lambda_{\min}\), while the
13-band interval gives
\(2.71\times10^6\,\mathrm{s}^{-1}=1.14\lambda_{\min}\). These values are
consistent with mixed modal amplitudes and threshold selection. By contrast,
the OPA388 4-to-\(2\,\mathrm{mV}\) interval gives only
\(0.258\lambda_{\min}\), directly exposing additional macromodel dynamics.
The application transients therefore provide order-of-magnitude support for
the lower edge and validate fixed-point behavior. The isolated experiment in
Fig.~\ref{fig:modal-validation}, rather than the overview and close-up in
Figs.~\ref{fig:heat-transients} and \ref{fig:heat-closeup}, is the direct
spectral-endpoint test.
The fixed-point errors and tolerance or task times for all six application
cases are collected in Table~\ref{tab:application}.

\begin{figure}[!ht]
\centering
\subfloat[LTC1151]{%
\includegraphics[width=.42\columnwidth]{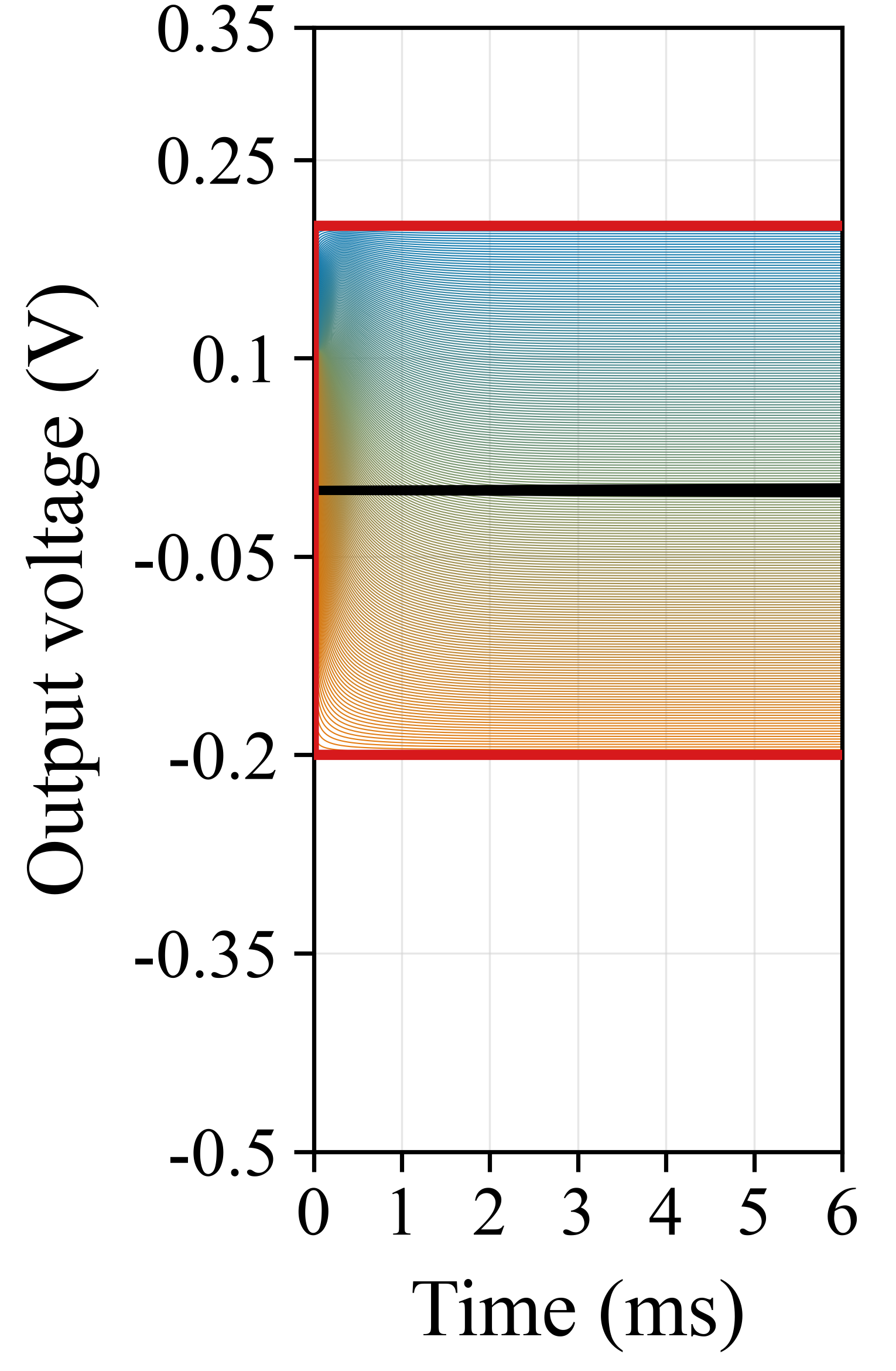}}
\hspace{.004\columnwidth}
\subfloat[ADA4528-1]{%
\includegraphics[width=.42\columnwidth]{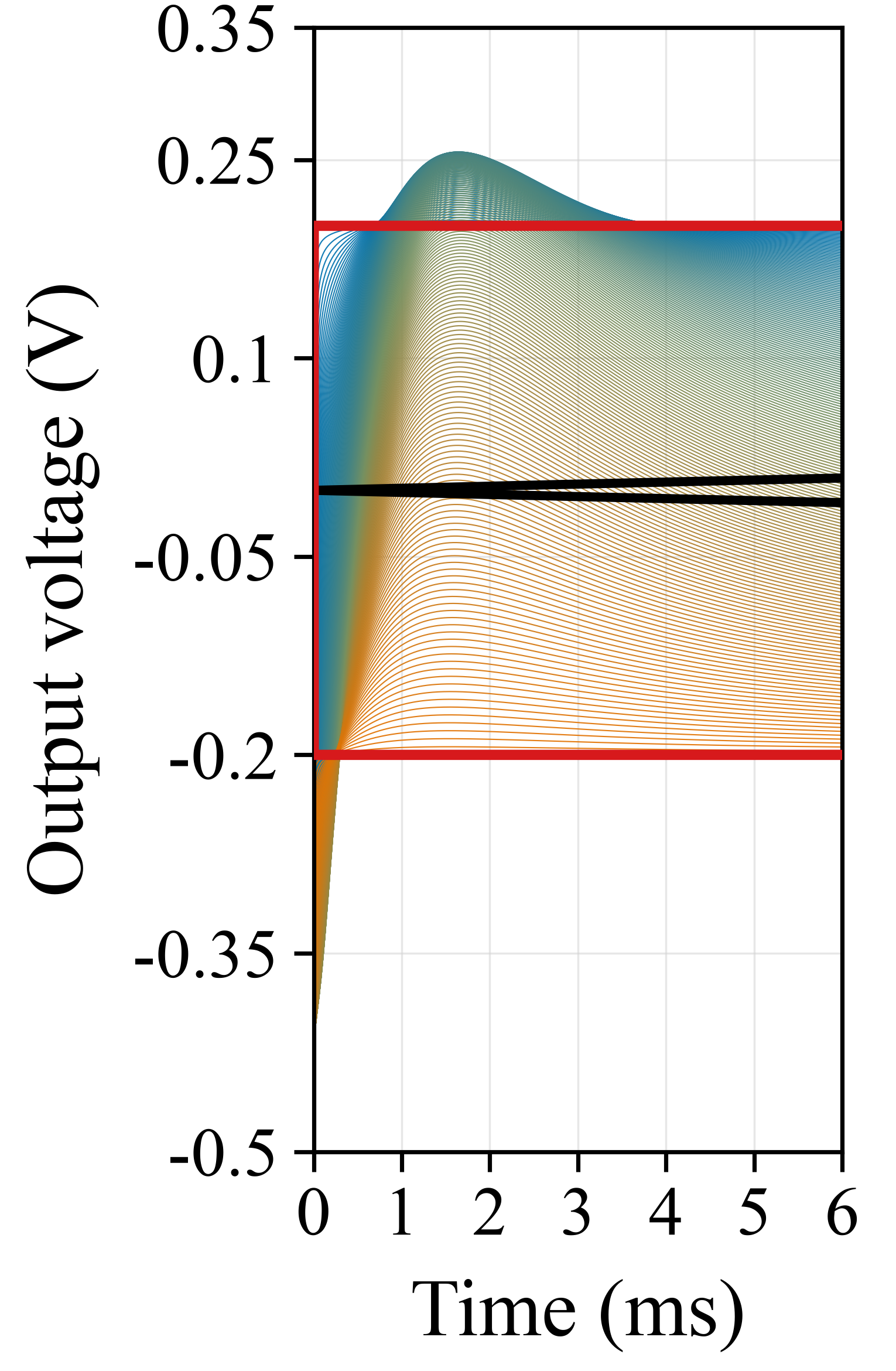}}
\hspace{.002\columnwidth}
\raisebox{.38in}{%
\includegraphics[height=1.44in]{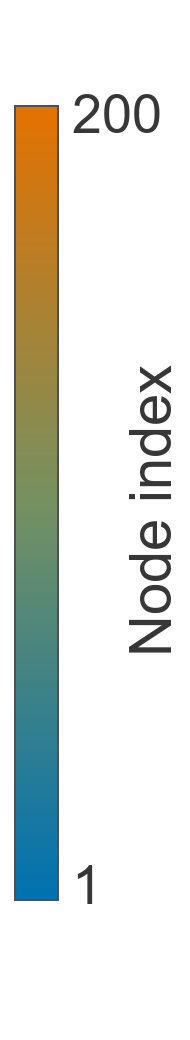}}
\caption{LTspice heat-transfer transients of all 200 outputs over
\(6\,\mathrm{ms}\). Red and black curves use
\(\lambda_{\mathrm{G,max}}\) and \(L_{\mathrm G}\), respectively, as modal
rate references rather than pointwise voltage bounds.}
\label{fig:heat-transients}
\par\vspace{.12em}
\makebox[\columnwidth][c]{%
\begin{minipage}[t]{.7\columnwidth}
\vspace{0pt}
\centering
\includegraphics[width=\linewidth]{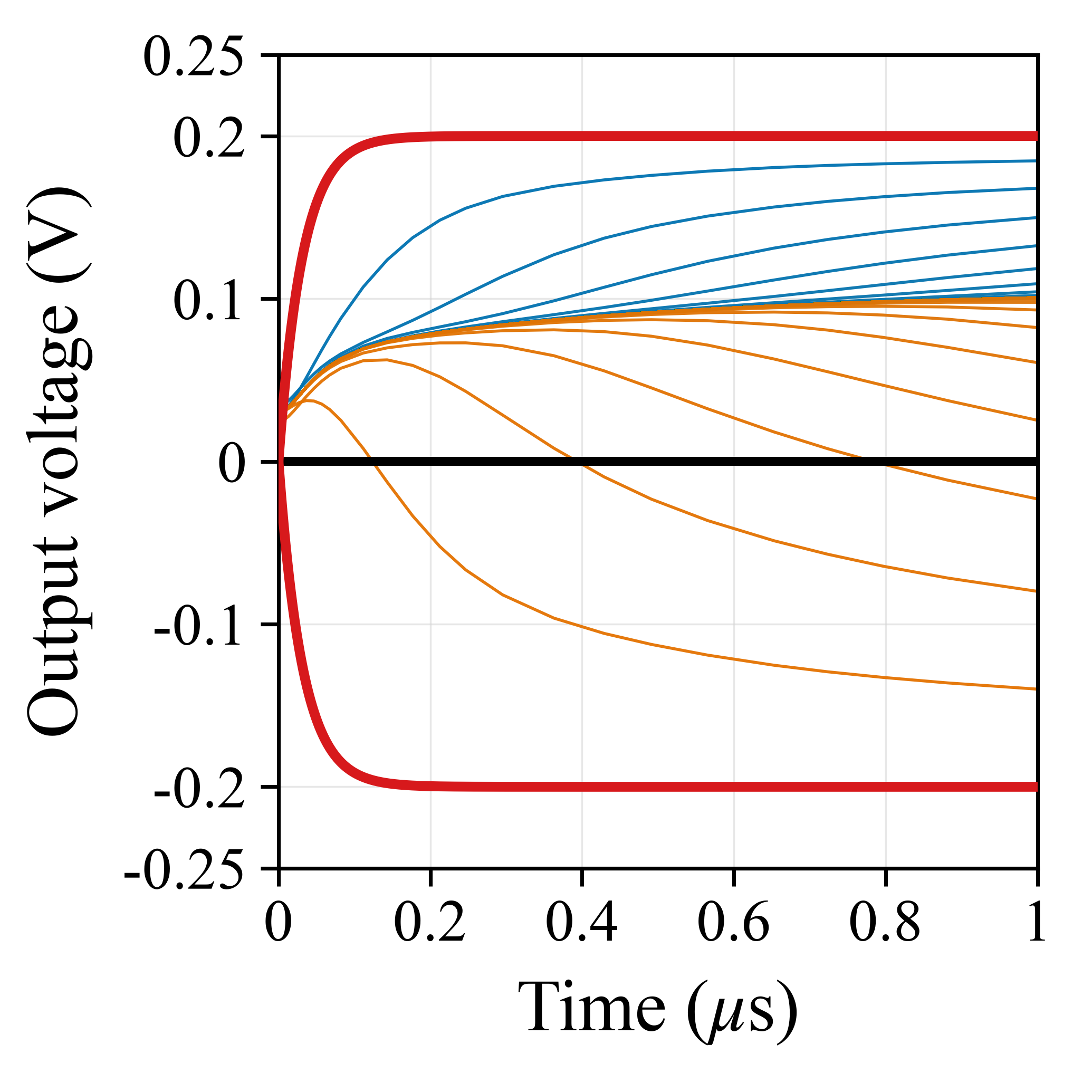}
\end{minipage}%
\hspace{.006\columnwidth}%
\begin{minipage}[t]{.10\columnwidth}
\vspace{.10in}
\centering
\includegraphics[height=1.55in]{fig_transient_node_index_colorbar.png}
\end{minipage}%
}
\caption{LTC1151 transient over the first \(1\,\mu\mathrm{s}\). The red
\(\lambda_{\mathrm{G,max}}=3.14\times10^7\,\mathrm{s}^{-1}\) reference
resolves the fast edge. The black \(L_{\mathrm G}\) reference remains near
its initial level.}
\label{fig:heat-closeup}
\end{figure}

\begin{figure*}[!t]
\centering
\subfloat[CMOS steady profiles]{%
\includegraphics[width=.42\textwidth]{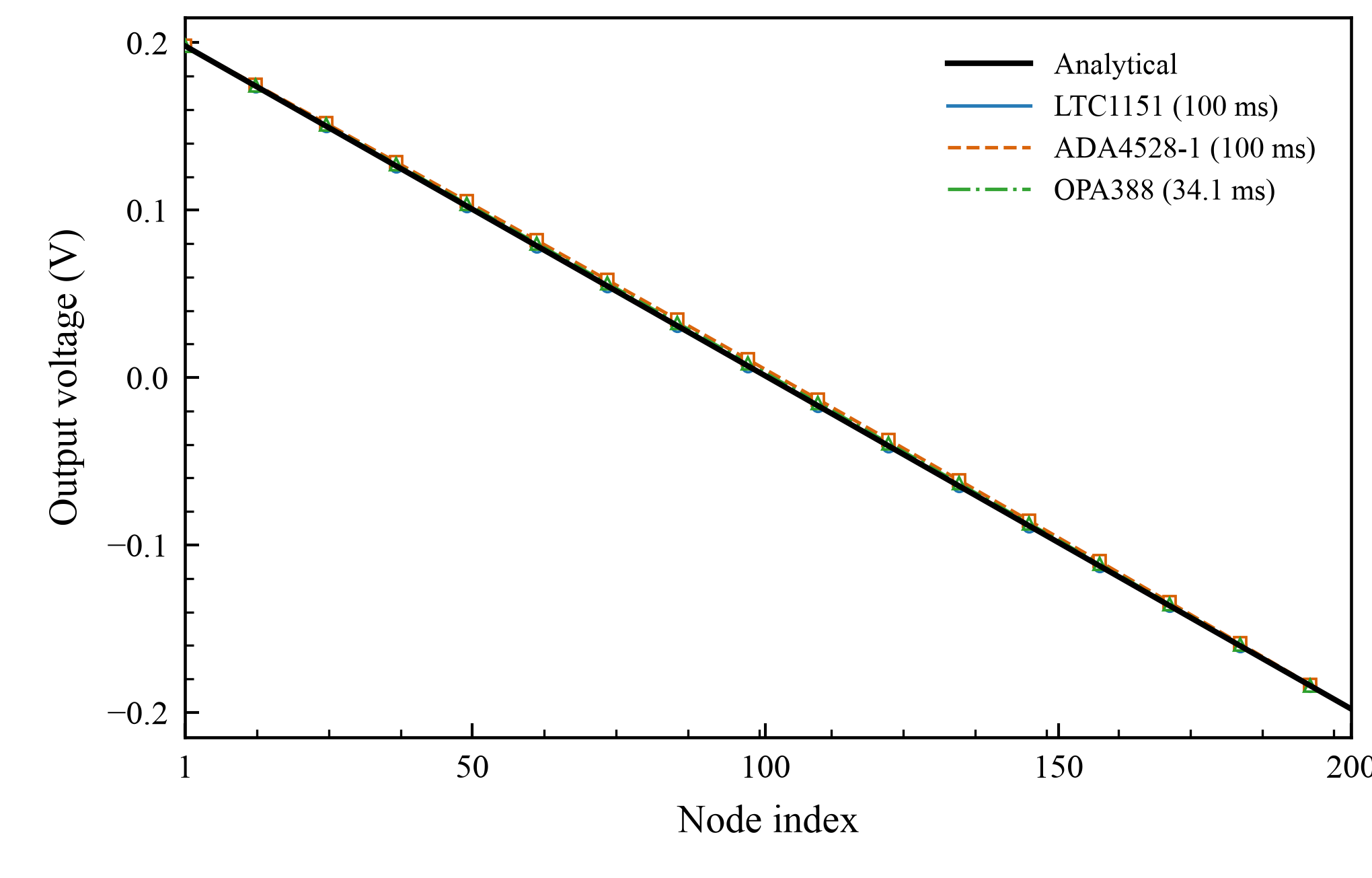}}
\hfill
\subfloat[Hybrid K2-W, \(0\)--\(300\,\mathrm{ms}\)]{%
\includegraphics[width=.42\textwidth]{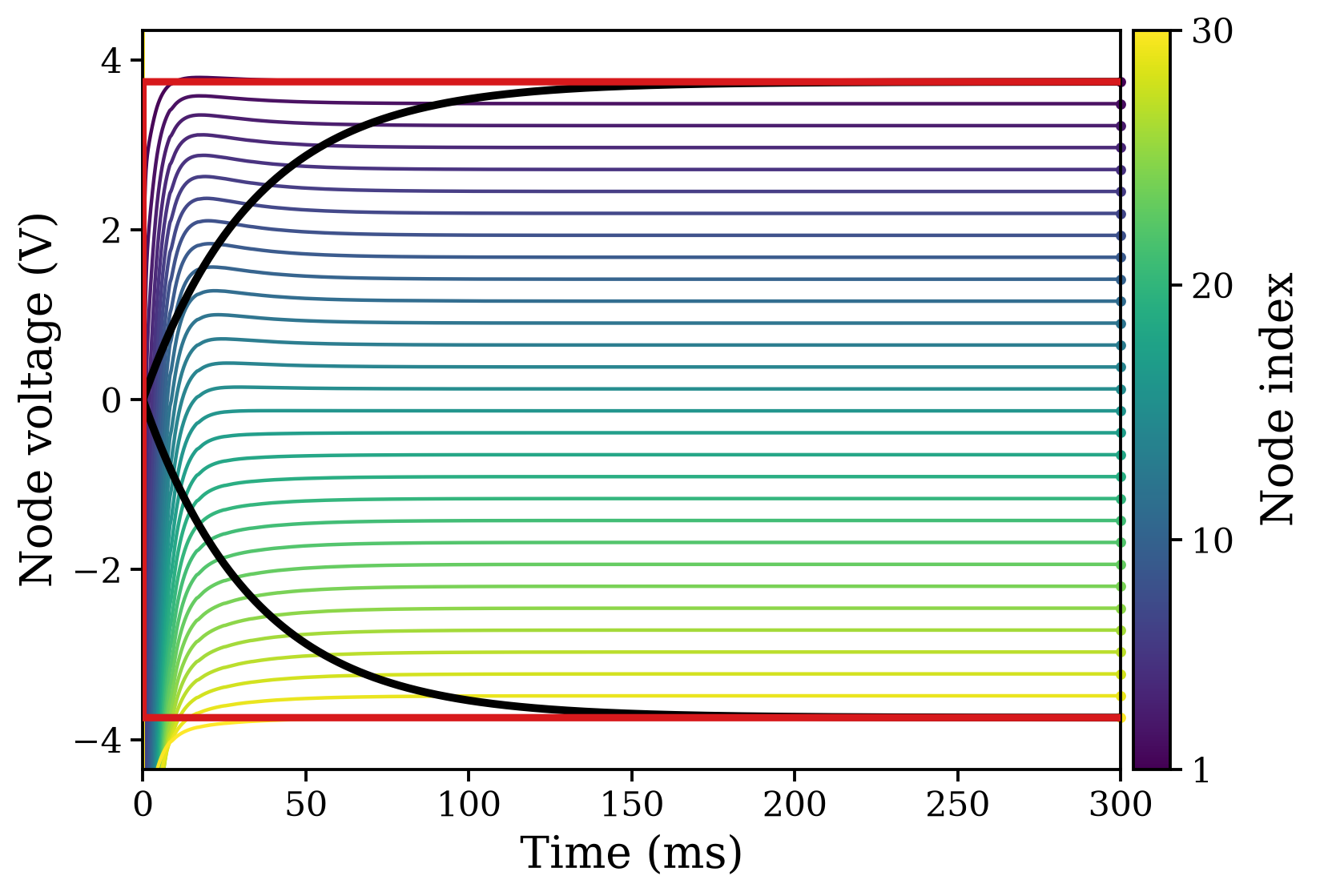}}
\caption{Additional heat-network macromodel evidence. (a) Settled CMOS
profiles compared with the ideal line. (b) Thirty-node hybrid
thermionic/behavioral transient; its thick reference curves indicate
reduced-model rate scales rather than pointwise bounds.}
\label{fig:heat-validation}
\end{figure*}

\begin{table*}[!t]
\centering
\caption{Generalized spectral endpoints and row-local upper certificates.
Heat-CMOS values use the closed form \eqref{eq:heat-spectrum}; graph values
use direct eigensolutions of the assembled matrices. The K2-W row uses the
fitted one-pole surrogate described in the text.}
\label{tab:spectrum}
\setlength{\tabcolsep}{5.0pt}
\begin{tabular}{lrrrrrrr}
\toprule
Case & \(n\) & \(\lambda_{\min}\) (s\(^{-1}\)) &
\(\lambda_{\max}\) (s\(^{-1}\)) &
\(\lambda_{\mathrm{G,max}}\) (s\(^{-1}\)) &
\(\lambda_{\mathrm{G,max}}/\lambda_{\max}\) &
\(\kappa_{\mathrm{dyn}}\) &
\(\lambda_{\mathrm{G,max}}/2\) \\
\midrule
Heat, LTC1151 & 200 & \(1.920{\times}10^3\) & \(3.1414{\times}10^7\) &
\(3.1416{\times}10^7\) & 1.00006 & \(1.636{\times}10^4\) & 15.708 MHz\\
Heat, ADA4528-1 & 200 & \(3.078{\times}10^3\) & \(5.0262{\times}10^7\) &
\(5.0265{\times}10^7\) & 1.00006 & \(1.633{\times}10^4\) & 25.133 MHz\\
Heat, OPA388 & 200 & \(7.677{\times}10^3\) & \(1.2566{\times}10^8\) &
\(1.2566{\times}10^8\) & 1.00006 & \(1.637{\times}10^4\) & 62.832 MHz\\
Heat, K2-W surrogate & 30 & \(3.38{\times}10^4\) & \(1.314{\times}10^7\) &
\(1.32{\times}10^7\) & 1.0046 & 388.8 & 6.60 MHz\\
Graph labeling & 200 & \(1.67{\times}10^5\) & \(1.98{\times}10^7\) &
\(3.31{\times}10^7\) & 1.672 & 118.6 & 16.55 MHz\\
13-band regression & 200 & \(2.37{\times}10^6\) & \(1.89{\times}10^7\) &
\(2.95{\times}10^7\) & 1.561 & 7.975 & 14.75 MHz\\
\bottomrule
\end{tabular}
\end{table*}

\begin{figure*}[!t]
\centering
\subfloat[Graph classification]{%
\includegraphics[width=.44\textwidth]{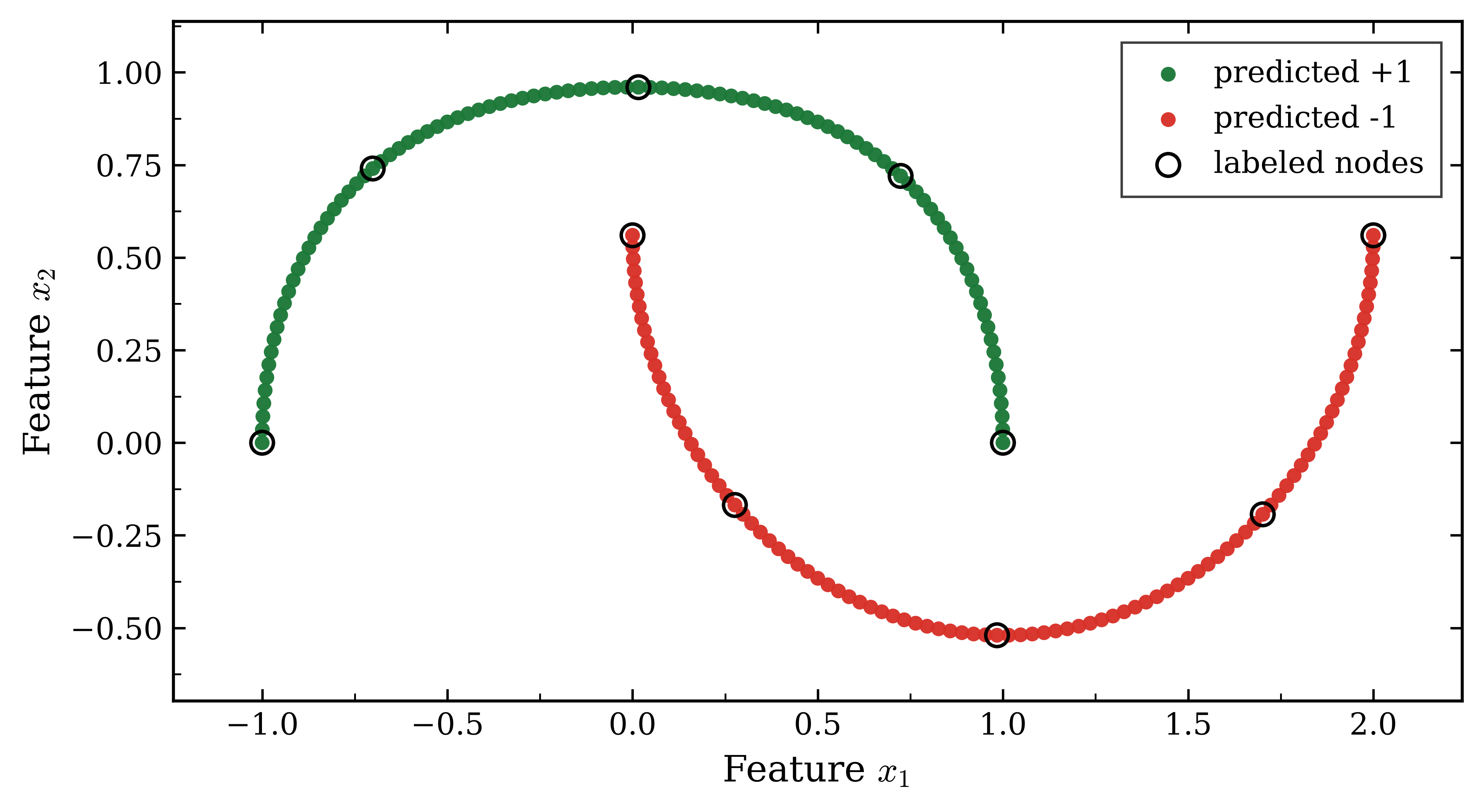}}
\hfill
\subfloat[Graph settling and bias]{%
\includegraphics[width=.44\textwidth]{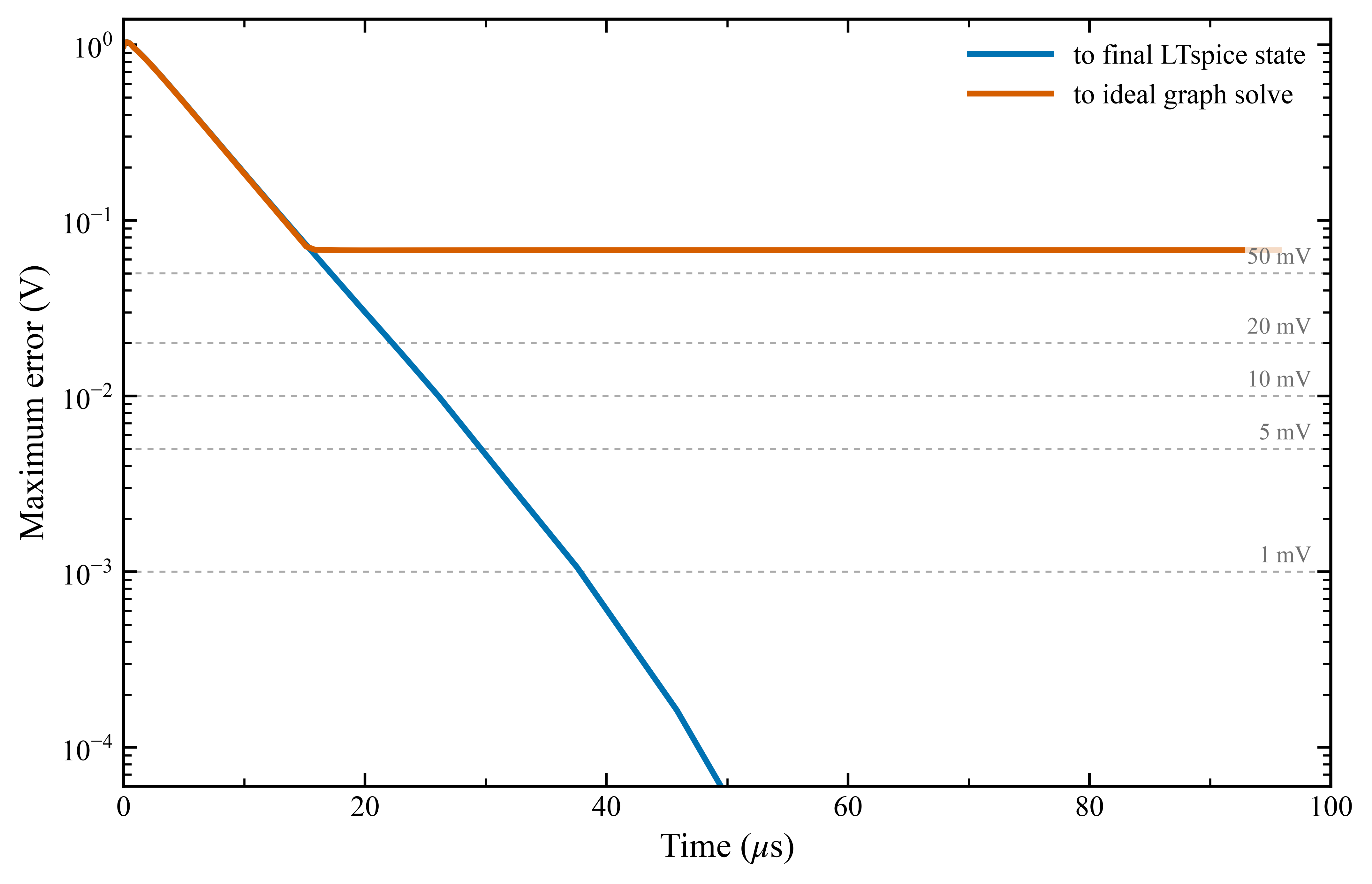}}\\[-.2em]
\subfloat[Regression, \(0\)--\(5\,\mu\mathrm{s}\)]{%
\includegraphics[width=.44\textwidth]{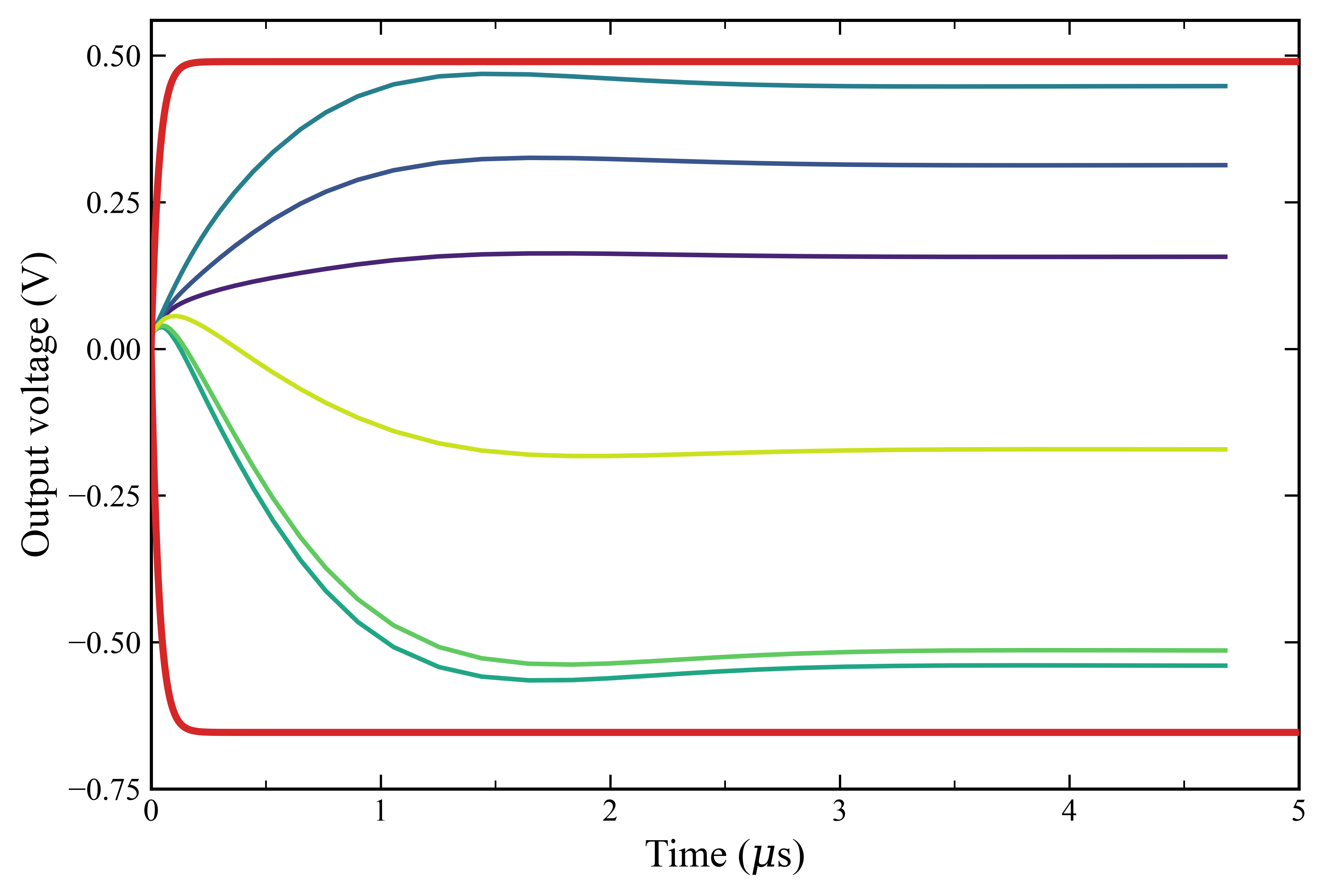}}
\hfill
\subfloat[Regression fixed point]{%
\includegraphics[width=.44\textwidth]{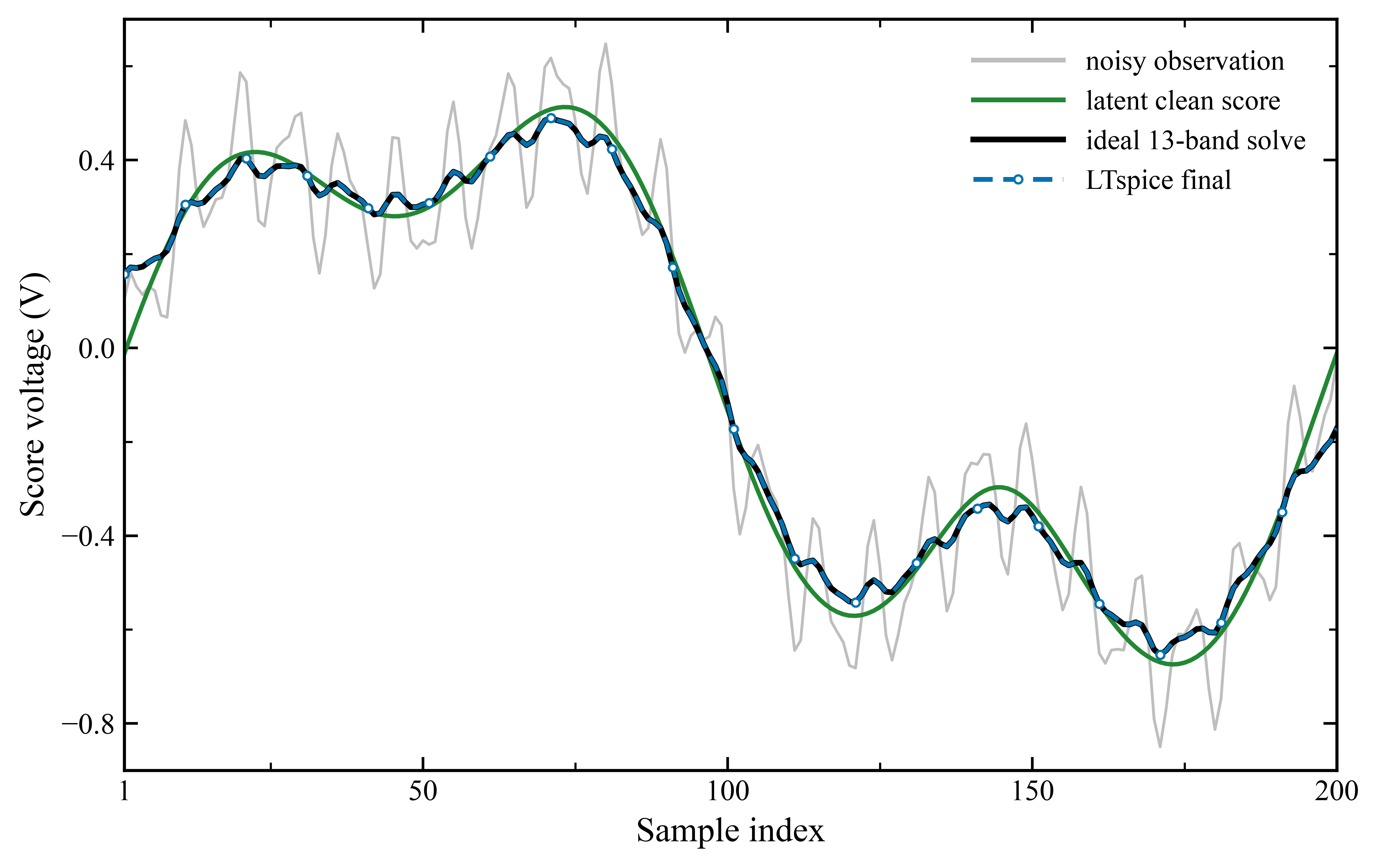}}
\caption{Irregular-topology macromodel studies. (a) Correct two-moons
classification from ten labeled nodes. (b) Error to the final circuit state
and to the ideal graph solve. (c) Representative 13-band regression
transients; the red scale marker is not a voltage bound. (d) Final regression
voltages overlap the ideal solution.}
\label{fig:graph-validation}
\end{figure*}

\begin{table}[!t]
\centering
\caption{Application-Level LTspice Outcomes}
\label{tab:application}
\setlength{\tabcolsep}{3.2pt}
\begin{tabular}{lrr}
\toprule
Case & Fixed-point result & Tolerance/task time\\
\midrule
Heat LTC1151 & \(0.040\) mV max. & \(1\) mV: \(2.62\) ms\\
Heat ADA4528-1 & \(3.84\) mV max. & \(4\) mV: \(20.53\) ms\\
Heat OPA388 & \(1.72\) mV max. & \(2\) mV: \(1.35\) ms\\
Heat K2-W hybrid & \(2.08\) mV max. & \(4\) mV: \(107.4\) ms\\
Graph labels & 200/200 correct & all signs: \(1.29\,\mu\)s\\
13-band regression & \(0.394\,\mu\)V max. & \(1\) mV: \(3.49\,\mu\)s\\
\bottomrule
\end{tabular}
\end{table}

Fig.~\ref{fig:certificate-euler} summarizes the upper-edge validation. Panel
(a) compares the row-local certificates with the exact upper rates across the
six cases, while panel (b) illustrates the three fastest-mode update regimes.

\begin{figure*}[!t]
\centering
\resizebox{\textwidth}{!}{
\begin{tikzpicture}[x=1cm,y=1cm,font=\scriptsize]
\begin{scope}[shift={(0.55,0.55)}]
\draw[->] (0,0) -- (7.10,0);
\draw[->] (0,0) -- (0,4.45);
\node[rotate=90] at (-0.62,2.2) {$\lambda_{\mathrm{G,max}}/\lambda_{\max}$};
\foreach \v/\yy in {1.0/0,1.2/1.0,1.4/2.0,1.6/3.0,1.8/4.0}{
  \draw (-0.07,\yy) -- (0.07,\yy); \node[left] at (-0.09,\yy) {\v};
  \draw[gray!25] (0,\yy) -- (6.75,\yy);
}
\fill[blue!65] (0.470,0) rectangle (1.030,0.0003);
\node[above] at (0.750,0.0003) {1.000};
\node[align=center,anchor=north] at (0.750,-0.13) {LTC};
\fill[blue!65] (1.470,0) rectangle (2.030,0.0003);
\node[above] at (1.750,0.0003) {1.000};
\node[align=center,anchor=north] at (1.750,-0.13) {ADA};
\fill[blue!65] (2.470,0) rectangle (3.030,0.0003);
\node[above] at (2.750,0.0003) {1.000};
\node[align=center,anchor=north] at (2.750,-0.13) {OPA};
\fill[blue!65] (3.470,0) rectangle (4.030,0.0228);
\node[above] at (3.750,0.0228) {1.005};
\node[align=center,anchor=north] at (3.750,-0.13) {K2-W};
\fill[blue!65] (4.470,0) rectangle (5.030,3.3586);
\node[above] at (4.750,3.3586) {1.67};
\node[align=center,anchor=north] at (4.750,-0.13) {Graph};
\fill[blue!65] (5.470,0) rectangle (6.030,2.8042);
\node[above] at (5.750,2.8042) {1.56};
\node[align=center,anchor=north] at (5.750,-0.13) {13-band};
\node[anchor=north west,font=\bfseries] at (0.02,4.38) {(a)};
\end{scope}
\begin{scope}[shift={(8.35,2.72)}]
\clip (-0.15,-2.25) rectangle (7.0,2.25);
\draw[->] (0,-2.12) -- (0,2.18);
\draw[->] (0,0) -- (6.75,0);
\foreach \n/\xx in {0/0,4/1.6,8/3.2,12/4.8,16/6.4}{\draw (\xx,-0.06)--(\xx,0.06); \node[below] at (\xx,-0.08) {\n};}
\foreach \v/\yy in {-2/-1.8,-1/-0.9,0/0,1/0.9,2/1.8}{\draw (-0.06,\yy)--(0.06,\yy); \node[left] at (-0.08,\yy) {\v};}
\foreach \yy in {-1.8,-0.9,0.9,1.8}{\draw[gray!25] (0,\yy)--(6.55,\yy);}
\node[rotate=90] at (-0.70,0) {Fastest-mode amplitude};
\node at (3.25,-2.05) {Explicit-update index, $n$};
\node[anchor=north west,font=\bfseries] at (0.08,2.10) {(b)};
\draw[green!55!black,thick,mark=*,mark size=0.65pt] plot coordinates {(0.000,0.90000) (0.400,0.18000) (0.800,0.03600) (1.200,0.00720) (1.600,0.00144) (2.000,0.00029) (2.400,0.00006) (2.800,0.00001) (3.200,0.00000) (3.600,0.00000) (4.000,0.00000) (4.400,0.00000) (4.800,0.00000) (5.200,0.00000) (5.600,0.00000) (6.000,0.00000) (6.400,0.00000)};
\draw[violet!85!black,thick,mark=*,mark size=0.65pt] plot coordinates {(0.000,0.90000) (0.400,-0.18000) (0.800,0.03600) (1.200,-0.00720) (1.600,0.00144) (2.000,-0.00029) (2.400,0.00006) (2.800,-0.00001) (3.200,0.00000) (3.600,-0.00000) (4.000,0.00000) (4.400,-0.00000) (4.800,0.00000) (5.200,-0.00000) (5.600,0.00000) (6.000,-0.00000) (6.400,0.00000)};
\draw[orange!90!black,thick,mark=*,mark size=0.65pt] plot coordinates {(0.000,0.90000) (0.400,-0.81000) (0.800,0.72900) (1.200,-0.65610) (1.600,0.59049) (2.000,-0.53144) (2.400,0.47830) (2.800,-0.43047) (3.200,0.38742) (3.600,-0.34868) (4.000,0.31381) (4.400,-0.28243) (4.800,0.25419) (5.200,-0.22877) (5.600,0.20589) (6.000,-0.18530) (6.400,0.16677)};
\draw[red!75!black,thick,mark=*,mark size=0.65pt] plot coordinates {(0.000,0.90000) (0.400,-0.94500) (0.800,0.99225) (1.200,-1.04186) (1.600,1.09396) (2.000,-1.14865) (2.400,1.20609) (2.800,-1.26639) (3.200,1.32971) (3.600,-1.39620) (4.000,1.46601) (4.400,-1.53931) (4.800,1.61627) (5.200,-1.69708) (5.600,1.78194) (6.000,-1.87104) (6.400,1.96459)};
\end{scope}
\begin{scope}[shift={(8.72,4.48)}]
\draw[green!55!black,thick] (0,0)--(0.38,0); \node[right] at (0.42,0) {$T_s\lambda_{\max}=0.8$};
\draw[violet!85!black,thick] (2.15,0)--(2.53,0); \node[right] at (2.57,0) {$1.2$};
\draw[orange!90!black,thick] (3.55,0)--(3.93,0); \node[right] at (3.97,0) {$1.9$};
\draw[red!75!black,thick] (4.95,0)--(5.33,0); \node[right] at (5.37,0) {$2.05$};
\end{scope}
\end{tikzpicture}}
\caption{Certificate and sampled-update behavior. (a) Ratio of the row-local
upper certificate to the exact generalized upper rate. (b) Exact fastest-mode
recurrence \((1-\CoE)^n\), where \(0<\CoE\le1\) is nonalternating,
\(1<\CoE<2\) is alternating but stable, and \(\CoE>2\) is unstable.}
\label{fig:certificate-euler}
\end{figure*}

\newpage
\section{Design Implications and Limitations}
The certificate is most useful before layout or full-array pole extraction.
It exposes which rows combine large \(\nu_t\), weak loading \(S_j\), and
strong normalized coupling. Those rows set the acquisition and explicit
update requirements. Equivalent variable scaling or preconditioning can
reduce \(\kappa_{\mathrm{dyn}}\) without changing the recovered solution
\cite{feinberg2021preconditioner}, and the bandwidth can then be allocated
heterogeneously using \eqref{eq:row-design}. Stronger anchors improve the
spectral gap only if they belong to the intended objective, because otherwise
they change the solved matrix. Uniform resistance scaling alone does not
change the reduced rate.

The analysis assumes reciprocal nonnegative couplings, positive diagonal
\(\mat C\), and small-signal one-pole follower dynamics
\cite{lin1986}. Active, signed, or nonnormal networks require a
realization-specific Lyapunov or pseudospectral analysis. The certificate
bounds the generalized eigenvalues of the reduced model, not every internal
pole of a vendor or chopper macromodel. Slew, saturation, additional poles,
mismatch, PVT variation, noise, converters, and interconnect can introduce
slower or nonmonotone behavior. The rail test is sufficient and conservative,
and its failure does not imply saturation. The evidence here is
macromodel-based rather than silicon measurement. The thermionic example utilizes a
hybrid tube-level/behavioral. Finally, the sampled limits apply only when a
controller implements \eqref{eq:euler-physical}. They are not stability
requirements for passive observation.

\section{Conclusion}
Finite amplifier bandwidth maps programmed row loading into a diagonal
dynamic metric. The resulting generalized spectrum has two distinct design
scales. The rate \(\lambda_{\min}\) governs worst-case settling, whereas
\(\lambda_{\max}\) governs the fastest reduced-model mode and the explicit
sampled-update limit. The proposed row-local certificate bounds the latter
without a global eigensolution. Its continuous-versus-sampled interpretation, conservative rail
condition, direct modal verification, and cross-topology results provide a
practical basis for choosing amplifier bandwidth, sampling rate, and network
conditioning while keeping the reduced model's assumptions explicit.

\appendices
\section{Hybrid Stabilized K2-W/K2-P Macromodel}
\label{app:k2w-macro}
\begin{figure*}[!t]
\centering
\subfloat[Assembled wrapper and expanded behavioral K2-P stabilization path.]{%
\resizebox{.98\textwidth}{!}{%
\begin{circuitikz}[american voltages, every node/.style={font=\scriptsize}]
\ctikzset{bipoles/length=0.82cm}
\tikzset{
  signal/.style={line width=0.55pt, line cap=round, line join=round},
  heater/.style={line width=0.55pt, green!45!black},
  rail/.style={line width=0.8pt},
  block/.style={draw, rounded corners=2pt, fill=blue!4, align=center,
    minimum width=1.65cm, minimum height=0.68cm, font=\scriptsize},
  sourceblock/.style={draw, rounded corners=2pt, fill=orange!9, align=center,
    minimum width=1.82cm, minimum height=0.72cm, font=\scriptsize},
  port/.style={font=\scriptsize\ttfamily, fill=white, inner sep=1pt},
  note/.style={font=\scriptsize, fill=white, inner sep=1pt}
}

\draw[rail] (-0.25,9.05) -- (21.25,9.05)
  node[right,note]{$V_+=+300\,\mathrm V$};
\draw[rail] (-0.25,0) -- (21.25,0) node[right,note]{COM};

\draw[signal] (0,7.45) node[ocirc]{} node[left,port]{INP}
  -- (2.10,7.45);
\draw[signal] (0,6.25) node[ocirc]{} node[left,port]{INN}
  -- (2.10,6.25);
\draw[signal] (0.50,7.45) to[R] (0.50,8.40)
  node[ocirc]{} node[right,port]{COM};
\draw[signal] (1.65,6.25) to[R] (1.65,5.15)
  node[ocirc]{} node[below,port]{COM};
\node[note, anchor=east, align=right] at (0.42,7.92)
  {$R_{\mathrm{INP,REF}}$\\$1\,\mathrm{T}\Omega$};
\node[note, anchor=west, align=left] at (1.73,5.70)
  {$R_{\mathrm{INN,REF}}$\\$1\,\mathrm{T}\Omega$};

\node[sourceblock] (BERR) at (3.25,6.85)
  {BERR\\$V(\mathrm{INP},\mathrm{INN})$};
\draw[signal] (2.10,7.45) |- ([yshift=0.16cm]BERR.west);
\draw[signal] (2.10,6.25) |- ([yshift=-0.16cm]BERR.west);
\draw[signal] (BERR.south) -- (3.25,5.75)
  node[ocirc]{} node[below,port]{COM};

\node[block] (K2P) at (6.20,6.85)
  {STAB\_K2P\_CORE\\expanded below};
\draw[signal] (BERR.east) -- (K2P.west)
  node[midway,above,port]{PERR};
\node[note, below=0.05cm of K2P] {$V_+,\mathrm{COM},\mathrm{H1},\mathrm{H2}$};

\node[sourceblock, minimum width=2.45cm] (BREF) at (10.00,6.85)
  {BREF\\$\mathrm{INP}+V_{\mathrm{OS}}$\\$+\mathrm{PCORR}-\mathrm{BIAS}$};
\draw[signal] (K2P.east) -- (BREF.west)
  node[midway,above,port]{PCORR};
\draw[signal] (2.10,7.45) -- (2.10,8.35) -- (10.00,8.35)
  -- (BREF.north);
\draw[signal] (BREF.south) -- (10.00,5.72)
  node[ocirc]{} node[below,port]{COM};

\node[block, minimum width=2.40cm, minimum height=1.08cm] (K2W)
  at (14.05,6.85) {STAB\_K2W\_CORE\\tube-level core\\shown in (b)};
\coordinate (K2W_INP_PORT) at ([yshift=0.20cm]K2W.west);
\coordinate (K2W_INN_PORT) at ([yshift=-0.20cm]K2W.west);
\draw[signal] (BREF.east)
  -- node[midway,above,port]{K2W\_INP} (12.45,6.85)
  |- (K2W_INP_PORT);
\draw[signal] (2.10,6.25) -- (2.10,5.45)
  -- ([yshift=-1.20cm]K2W_INN_PORT)
  -- node[midway,left,port]{INN} (K2W_INN_PORT);
\draw[signal] (K2W.east) -- (16.55,6.85) node[ocirc]{}
  node[right,port]{OUT};
\node[note, below=0.05cm of K2W] {$V_+,V_-,\mathrm{COM}$};
\node[note, align=left, anchor=west] at (17.05,7.15)
  {$V_{\mathrm{OS}}=0$\\$\mathrm{BIAS}=-1.705\,\mathrm V$\\$G_{\mathrm{K2P}}=1000$\\$V_{\mathrm{OUTLIM}}=30\,\mathrm V$\\\texttt{DRIVE\_SHARPNESS}=2};

\node[draw, dashed, rounded corners=2pt, fit={(0.05,0.30) (20.85,4.75)},
  inner sep=4pt, label={[note]above:expanded \texttt{STAB\_K2P\_CORE}}] {};
\draw[signal] (0.35,2.80) node[ocirc]{} node[left,port]{PERR}
  -- (2.10,2.80);
\draw[signal] (1.10,2.80) to[R,l_=$R_{\mathrm{IN}}\ 2\,\mathrm{M}\Omega$]
  (1.10,0);

\node[block] (BCHOP) at (3.10,2.80) {BCHOP\\chop};
\node[block] (BAMP) at (7.00,2.80) {BAMP\\gain/limit};
\node[block] (BDEMOD) at (10.90,2.80) {BDEMOD\\demod};
\draw[signal] (2.10,2.80) -- (BCHOP.west);
\draw[signal] (BCHOP.east) -- (BAMP.west)
  node[midway,above,port]{CHOP};
\draw[signal] (BAMP.east) -- (BDEMOD.west)
  node[midway,above,port]{AMP};
\draw[signal] (BDEMOD.east) -- (12.35,2.80)
  node[midway,above,port]{DET}
  to[R,l=$R_{\mathrm{DET}}\ 22\,\mathrm{M}\Omega$] (14.65,2.80)
  -- (15.50,2.80) node[ocirc]{} node[right,port]{PCORR};
\draw[signal] (14.65,2.80) to[C,l_=$C_{\mathrm{OUT}}\ 1\,\mu\mathrm F$,a^={$R_s=1\,\Omega$}]
  (14.65,0);

\node[block, minimum width=1.55cm] (BPWR) at (7.60,4.12)
  {BPWR\\supply gate\\$V(V_+,\mathrm{COM})$};
\draw[signal] (7.60,9.05) -- (7.60,8.55)
  arc[start angle=90,end angle=-90,radius=0.20]
  -- (7.60,7.05)
  arc[start angle=90,end angle=-90,radius=0.20]
  -- (7.60,5.65)
  arc[start angle=90,end angle=-90,radius=0.20]
  -- (BPWR.north);
\draw[signal] (BPWR.south) -- (7.60,3.38)
  -- node[midway,above,port]{PWR\_OK} (7.00,3.38)
  -- (BAMP.north);
\draw[signal] (17.05,9.05) to[R,l=$R_{\mathrm{HV}}\ 125\,\mathrm{k}\Omega$]
  (17.05,0);

\draw[heater] (2.10,1.20) node[ocirc]{} node[left,port]{H1}
  to[R,l=$R_{\mathrm{HEATER}}\ 14\,\Omega$] (4.65,1.20)
  node[ocirc]{} node[right,port]{H2};
\draw[heater] (4.65,1.20) to[R,l_=$R_{\mathrm{HREF}}\ 1\,\mathrm{T}\Omega$]
  (4.65,0);
\draw[heater,->] (2.55,1.20) |- (BCHOP.south);
\draw[heater,->] (4.20,1.20) -| (BDEMOD.south);
\end{circuitikz}}}
\par\vspace{.35em}
\subfloat[Tube-level K2-W core and passive compensation network.]{%
\resizebox{.98\textwidth}{!}{%
\begin{circuitikz}[american voltages, every node/.style={font=\scriptsize}]
\ctikzset{
  tubes/width=0.88,
  tubes/height=1.20,
  bipoles/length=0.86cm
}
\tikzset{
  signal/.style={line width=0.55pt, line cap=round, line join=round},
  railplus/.style={line width=0.85pt, red!65!black},
  railminus/.style={line width=0.85pt, blue!65!black},
  railcom/.style={line width=0.75pt, gray!70!black},
  port/.style={font=\scriptsize\ttfamily, fill=white, inner sep=1pt},
  note/.style={font=\scriptsize, fill=white, inner sep=1pt},
  tubelabel/.style={font=\scriptsize, align=center, fill=white, inner sep=1pt},
  neon/.style={draw, circle, fill=yellow!15, minimum size=0.48cm,
    inner sep=0pt, font=\tiny}
}

\draw[railplus] (-0.15,7.10) -- (22.75,7.10)
  node[right,note]{$V_+=+300\,\mathrm V$};
\draw[railcom] (-0.15,1.15) -- (22.75,1.15) node[right,note]{COM};
\draw[railminus] (-0.15,-0.15) -- (22.75,-0.15)
  node[right,note]{$V_-=-300\,\mathrm V$};

\node[triode] (U1) at (3.00,5.00) {};
\node[tubelabel, above] at (U1.north) {XU1\\12AX7};
\node[triode] (U2) at (6.10,4.00) {};
\node[tubelabel, above] at (U2.north) {XU2\\12AX7};
\node[triode] (U3) at (12.15,4.00) {};
\node[tubelabel, above] at (U3.north) {XU3\\12AX7};
\node[triode] (U4) at (19.15,4.00) {};
\node[tubelabel, above] at (U4.north) {XU4\\12AX7};

\draw[signal] (0.20,5.00) node[ocirc]{} node[left,port]{INN}
  -- (U1.grid);
\draw[signal] (0.20,4.00) node[ocirc]{} node[left,port]{K2W\_INP}
  -- (U2.grid);
\draw[signal] (U1.anode) -- (3.00,7.10);
\coordinate (K12) at (4.55,2.25);
\draw[signal] (U1.cathode) |- (K12);
\draw[signal] (U2.cathode) |- (K12);
\node[circ] at (K12) {};
\node[above,port] at (K12) {K12};
\draw[signal] (K12) -- (4.55,1.25)
  arc[start angle=90,end angle=-90,radius=0.10]
  -- (4.55,0.95)
  to[R,l=$R_5\ 220\,\mathrm{k}\Omega$] (4.55,0.05)
  -- (4.55,-0.15);

\coordinate (P2) at (6.10,5.50);
\draw[signal] (U2.anode) -- (P2) node[above,port]{P2};
\node[circ] at (P2) {};
\draw[signal] (P2) to[R,l=$R_1\ 220\,\mathrm{k}\Omega$]
  (6.10,7.10);
\coordinate (G3) at (10.35,4.00);
\draw[signal] (P2) -- (8.00,5.50)
  to[R,l=$R_4\ 1\,\mathrm{M}\Omega$] (10.35,5.50)
  -- (G3) -- (U3.grid);
\node[circ] at (G3) {};
\node[above,port] at (G3) {G3};
\draw[signal] (G3) -- (10.35,1.25)
  arc[start angle=90,end angle=-90,radius=0.10]
  -- (10.35,0.95)
  to[R,l_=$R_6\ 2.2\,\mathrm{M}\Omega$] (10.35,0.05)
  -- (10.35,-0.15);

\coordinate (P3) at (12.15,5.50);
\draw[signal] (U3.anode) -- (P3) node[above left,port]{P3};
\node[circ] at (P3) {};
\draw[signal] (P3) to[R,l=$R_2\ 470\,\mathrm{k}\Omega$]
  (12.15,7.10);
\draw[signal] (P3) -- (13.05,5.50)
  to[R,l=$R_3\ 680\,\mathrm{k}\Omega$] (15.40,5.50)
  coordinate (NE2A) node[above,port]{NE2A};
\node[neon] (NE1) at (16.10,5.50) {XNE1};
\node[neon] (NE2) at (17.10,5.50) {XNE2};
\node[port] at (16.60,5.18) {NE2B};
\coordinate (G4) at (18.35,5.50);
\draw[signal] (NE2A) -- (NE1.west);
\draw[signal] (NE1.east) -- (NE2.west);
\draw[signal] (NE2.east) -- (G4) -- (18.35,4.00) -- (U4.grid);
\node[above,port] at (G4) {G4};
\draw[signal] (P3) -- (13.05,5.50) -- (13.05,4.72)
  -- (13.75,4.72)
  to[C,l_=$C_3\ 7.5\,\mathrm{pF}$] (15.35,4.72)
  -- (18.35,4.72) node[circ]{};
\draw[signal] (G4) -- (18.35,2.35)
  arc[start angle=90,end angle=-90,radius=0.10]
  -- (18.35,1.25)
  arc[start angle=90,end angle=-90,radius=0.10]
  -- (18.35,0.95)
  to[R,l_=$R_8\ 4.7\,\mathrm{M}\Omega$] (18.35,0.05)
  -- (18.35,-0.15);

\coordinate (K3) at (12.15,2.25);
\draw[signal] (U3.cathode) |- (K3) node[left,port]{K3};
\node[circ] at (K3) {};
\draw[signal] (K3) to[R,l_=$R_7\ 10\,\mathrm{k}\Omega$]
  (12.15,1.15) node[circ]{};
\draw[signal] (K3) -- (13.65,2.25)
  to[C] (13.65,1.15) node[circ]{};
\node[note, anchor=west] at (14.05,1.84)
  {$C_1\ 500\,\mathrm{pF}$};

\coordinate (OUT) at (19.15,2.25);
\draw[signal] (U4.anode) -- (19.15,7.10);
\draw[signal] (U4.cathode) |- (OUT) -- (22.20,2.25)
  node[ocirc]{} node[right,port]{OUT};
\node[circ] at (OUT) {};
\draw[signal] (K3) to[R,l=$R_{11}\ 220\,\mathrm{k}\Omega$] (OUT);
\draw[signal] (10.35,3.05) node[circ]{}
  -- (11.97,3.05)
  arc[start angle=180,end angle=0,radius=0.18]
  -- (14.85,3.05)
  to[C,l=$C_2\ 7.5\,\mathrm{pF}$] (16.45,3.05)
  -- (17.10,3.05) -- (17.10,2.25) node[circ]{};
\draw[signal] (OUT) -- (20.10,2.25) node[circ]{} -- (20.10,1.25)
  arc[start angle=90,end angle=-90,radius=0.10]
  -- (20.10,0.95)
  to[R,l_=$R_9\ 220\,\mathrm{k}\Omega$] (20.10,0.05)
  -- (20.10,-0.15);
\draw[signal] (OUT) -- (21.10,2.25) node[circ]{} -- (21.10,1.25)
  arc[start angle=90,end angle=-90,radius=0.10]
  -- (21.10,0.95)
  to[R,l=$R_{10}\ 270\,\mathrm{k}\Omega$] (21.10,0.05)
  -- (21.10,-0.15);
\end{circuitikz}}}
\caption{Hybrid chopper-stabilized Philbrick K2-W macromodel used in the
30-node heat benchmark. The green connections in (a) identify the heater-driven chopping and
demodulation controls, and both represent dependence on the differential voltage
\(V(\mathrm{H1},\mathrm{H2})\), rather than additional electrical ports on the
signal chain. Every small bridge in (a) and (b) denotes a wire crossover, not
a junction. The \(C_1\)--\(C_3\) SPICE elements each use
\(R_{\mathrm{ser}}=1\,\Omega\).}
\label{fig:k2w-macro-schematic}
\end{figure*}

The thermionic vacuum-tube benchmark uses the assembled \texttt{STABILIZED\_K2W}
subcircuit summarized in Fig.~\ref{fig:k2w-macro-schematic}. The wrapper
senses the external differential error with \texttt{BERR}, processes its
low-frequency component through the K2-P chopper-stabilization core, and adds
the correction to the K2-W noninverting grid through \texttt{BREF}. The
benchmark closes unity feedback externally by connecting \texttt{OUT} to
\texttt{INN}, and this connection is not internal to the macro. The default model
parameters are \(V_{\mathrm{OS}}=0\),
\(\mathrm{BIAS}=-1.705\,\mathrm V\), K2-P gain \(1000\), and output limit
\(30\,\mathrm V\). Plate supplies are \(\pm300\,\mathrm V\), and the K2-P
heater/chopper terminals are driven by \(6.3\,\mathrm{V_{rms}}\) at
\(60\,\mathrm{Hz}\).

The K2-W signal path is represented at tube level by four 12AX7 triode
subcircuits, the passive bias and compensation network, and two neon-bulb
switch models. The K2-P path is intentionally behavioral, and voltage-defined
SPICE blocks implement its chopping, limited gain, synchronous demodulation,
and supply gating.


\IEEEtriggeratref{17}

\end{document}